\newcommand{\R}{\mathbb{R}}
\newcommand{\N}{\mathbb{N}}
\newcommand{\Bb}{\mathcal{B}}
\newcommand{\Xx}{\mathcal{X}}
\newcommand{\Sys}{\mathfrak{S}}
\newcommand{\Ll}{\mathcal{L}}
\newcommand{\Inf}{\mathsf{Inf}}
\newcommand{\Tt}{\mathcal{T}}
\newcommand{\Vv}{\mathcal{V}}
\newcommand{\Aa}{\mathcal{A}}
\newcommand{\Tr}{\mathsf{Tr}_{(\Sys, \Ll, \kappa)}}
\newcommand{\Reach}{\mathsf{Reach}}
\newcommand{\set}[1]{\{ #1 \}}
\newcommand{\Zz}{\mathcal{Z}}
\begin{document}

\title{Control Closure Certificates} 

\author{Vishnu Murali (\Letter), Mohammed Adib Oumer, and Majid Zamani \thanks{This work was supported by the NSF grants CNS-2111688 and CNS-2145184.} }
\authorrunning{V. Murali, M.A. Oumer and M. Zamani}
\institute{University of Colorado Boulder}
\maketitle

\begin{abstract}
    \label{sec:abstract}
    This paper introduces the notion of control closure certificates ($\text{C}^3$)  to synthesize controllers for discrete-time control systems against  $\omega$-regular specifications.
    Typical functional approaches to synthesize controllers against $\omega$-regular specifications rely on combining inductive invariants (for example, via barrier certificates) with proofs of well-foundedness (for example, via ranking functions).
    Transition invariants, provide an alternative where instead of standard well-foundedness arguments one may instead search for disjunctive well-foundedness arguments that together ensure a well-foundedness argument. 
     Closure certificates, functional analogs of transition invariants, provide an effective, automated approach to verify discrete-time dynamical systems against linear temporal logic and $\omega$-regular specifications.
     We build on this notion to synthesize controllers to ensure the satisfaction of  $\omega$-regular specifications.
     To do so, we first illustrate how one may construct control closure certificates to visit a region infinitely often (or only finitely often) via disjunctive well-founded arguments.
     We then combine these arguments to provide an argument for parity specifications.
     Thus, finding an appropriate $\text{C}^3$ over the product of the system and a parity automaton specifying a desired $\omega$-regular specification ensures that there exists a controller $\kappa$ to enforce the $\omega$-regular specification.
     We propose a sum-of-squares optimization approach to synthesize such certificates and demonstrate their efficacy in designing controllers over some case studies.
\end{abstract}

\begin{keywords}
\label{sec:keywords}
discrete-time control systems, transition invariants, control synthesis, $\omega$-regular properties, parity automata
\end{keywords}

\section{Introduction} 
\label{sec:intro}

We introduce a notion of control closure certificates to synthesize controllers for discrete-time control systems against $\omega$-regular specifications.
Closure certificates~\cite{murali2024closure}, are a functional analog of transition invariants to verify discrete-time systems against $\omega$-regular specifications.
Unlike barrier certificates \cite{prajna_2004_safety}, which seek to overapproximate the reachable states of a system, closure certificates are real-valued functions that seek to overapproximate the reachable \emph{transitions} of a system.
When used for safety specifications, such certificates often allow for much simpler templates \cite[Section 3.1]{murali2024closure} to prove safety when compared to traditional barrier certificates.
When combined with proofs of well-foundedness \cite{cook_2009_priciples}, such certificates may be leveraged to prove $\omega$-regular specifications.
Similar to barrier certificates, one may automate the search for closure certificates via optimization techniques such as sum-of-squares (SOS) \cite{Parrilo_2003}, satisfiability modulo theory solvers (SMT) \cite{moura_2011_smt,moura_2008_z3,gao_dreal_2013}, and neural networks \cite{abate2021fossil,nadali_2024_closure}.
In the above cases, one fixes the certificate to be within a template class, \textit{e.g.}, polynomials of a fixed degree or neural networks of a fixed size, and then proceeds to search for such a certificate within the template class.
We build on this work to show how one may simultaneously search for both a certificate, as well as a controller to ensure the satisfaction of a desired $\omega$-regular specification.
To do so we first provide certificate conditions to ensure that a set is visited infinitely often, or that it is visited only finitely often. 
We then show how one may combine these proof techniques to design controllers for more general $\omega$-regular specifications.

\noindent \textbf{Functional proofs for $\omega$-regular specifications.}
Functional approaches to synthesize controllers for $\omega$-regular specifications rely on a combination of finding inductive invariants with well-foundedness arguments \cite{dimitrova_2014_deductive,abate2024stochastic,chatterjee2024sound}.
Such a proofs of well-foundedness, similar to proofs of termination \cite{cook_2009_priciples}, can be reduced to finding an appropriate ranking function $\Vv$. 
To illustrate these approaches, let $\Xx$ denote the set of states of the system, and assume that one is able to determine the exact reachable set of a system (denoted $\Reach$), and let $\Xx_{VF}$ be a set of states of a system that must be visited only finitely often.
Consider a function $\Vv: \Xx \to \R$ from the states of the system to the reals such that:
\begin{align}
    & \Vv(x') \leq \Vv(x) - \xi, &&  \text{ if }x \in \Xx_{VF} \cap \Reach \text{ and }  \nonumber \\
    & \Vv(x') \leq \Vv(x) && \text{ if } x \in \Reach \setminus \Xx_{VF}, \nonumber
\end{align}
where $x'$ is the one-step transition from state $x$ and $\xi$ is some positive value. 
Then the existence of a function $\Vv$ that is bounded from below and satisfies the above conditions provides a proof that the system visits the set $\Xx_{VF}$ only finitely often. 
As one typically does not know the set $\Reach$, one instead tries to ensure that the above conditions hold over an inductive state invariant that overapproximates the set $\Reach$.
The promise of such functional proofs lie in their automatability.
One may effectively search for them via optimization \cite{prajna_2004_safety,Parrilo_2003} or learning-based \cite{abate2021fossil} techniques.

\noindent \textbf{Transition invariants.}
In the above discussion we considered the existence of a single function $\Vv$ to prove a set $\Xx_{VF}$ is visited only finitely often.
Practically, one may want to instead consider partitions over relevant sets and try to find independent ranking function arguments to prove this.
Unfortunately, such a strategy fails to be sound in general.
This insight was observed in \cite{podelski_2004_transition}, which showed that one cannot prove a relation to be well-founded if it is a union of well-founded relations. 
However, it is possible to prove a relation is well-founded if its transitive closure (cf. Section \ref{sec:prelims}) is disjunctively well-founded.
To overapproximate the transitive closure, they introduced a notion of \emph{transition invariants}, that overapproximate the reachable transitions of a system.
Hence one is able to leverage transition invariants to find independent ranking functions to prove well-foundedness.
While transition invariants provide an approach to prove a set is visited only finitely often, one often requires to show the dual, that a set be visited infinitely often, or both.
This is the case when dealing with deterministic automata that describe $\omega$-regular languages such as Rabin, Streett, or parity automata.
Thus, we consider the question of how disjunctive well-founded proofs help in designing controllers for $\omega$-regular specifications. 

\noindent \textbf{Our Contributions.}
\begin{itemize}
    \item We introduce a notion of control closure certificates to synthesize controllers for discrete-time control systems against $\omega$-regular specifications.
    \item We show how proofs of disjunctive well-foundedness are useful not just for showing a set is visited only finitely often but also in proofs of showing a set is visited infinitely often, thus making them amenable to provide conditions for synthesis against $\omega$-regular properties. 
    \item We rely on an optimization-based approach to automate the search for these control closure certificates and demonstrate their use in some case studies.
\end{itemize}

\noindent \textbf{Related works.} 
The results in \cite{prajna_2004_safety} introduced a notion of barrier certificates to act as functional inductive invariants for hybrid systems.
These results illustrated how one may effectively automate the search for such  such certificates via optimization techniques such as sum-of-squares programming \cite{Parrilo_2003}.
Building on this idea, the results in \cite{dimitrova_2014_deductive} considered a notion of parity certificates to synthesize controllers for objectives specified by parity automata and then demonstrated how one may use such certificates in verifying alternating-time temporal logic properties \cite{alur2002alternating} .
These certificates relied on a combination of invariant arguments (characterized as barrier certificates) with ranking functions (described as Lyapunov functions) to design controllers.
Similar to these results, the results in \cite{abate2024stochastic} proposed a notion of Streett supermartingales to synthesize controllers for stochastic systems against Streett objectives.
The results in \cite{chatterjee2024sound} introduced a notion of B\"uchi ranking functions to provide a sound and semi-complete proof rule for real-valued programs via Putinar's Positivestellensatz \cite{putinar1993positive}.
More recent works consider extensions for stochastic systems \cite{majumdar2024necessary,abate2025quantitative,henzinger2025supermartingale}.
The results in \cite{murali2024closure} proposed a notion of closure certificates to describe functional transition invariants analogous to barrier certificates and considered techniques to automate the search for these certificates via sum-of-squares (SOS) programming and satisfiability modulo theory (SMT) \cite{moura_2011_smt} solvers.
Building on this, the results in \cite{wang2025verifying} considered their use for recurrence (showing a set is visited infinitely often).
Our work builds on the notion of closure certificates to find transition invariants and proofs of disjunctive well-foundedness to synthesize controllers for specifications characterized by parity automata.
We show that while one may directly adapt conditions for recurrence as in \cite{wang2025verifying}, one faces challenges when trying to prove disjunctive well-foundedness (cf. Section \ref{subsec:ccc_buchi}).
We thus propose alternative conditions to show a set is visited infinitely often by relying on proofs of disjunctive well-foundedness.
These disjunctive well-foundedness proofs may be used as acceleration lemmas \cite{heim2024solving}, similar to the ones used for ranking functions and state invariants.
Another approach to design controllers for $\omega$-regular specification is to construct a finite abstractions of the system and design controllers for these finite abstractions as illustrated by \cite{henzinger_1997_hytech,lahijanian_2011_temporal,rungger_2016_scots,khaled_2021_omegathreads}.
These build on existing synthesis techniques for finite-state systems \cite{baier_2008_principles,tabuada_2009_verification}.

\section{Preliminaries}
\label{sec:prelims}
We denote the set of natural numbers and reals by $\N$ and $\R$ respectively.
Given a number $a \in \R$, we use $\R_{\geq a}$ and $\R_{> a}$ to denote the intervals $[a, \infty[$ and $]a,\infty[$ respectively, and similarly, for any natural number $n \in \N$, we use $\N_{\geq n}$ to denote the set of natural numbers greater than or equal to $n$.
Let $R \subseteq A \times B$ be a relation, and $a \in A$, we use $R(a)$ to denote the set $\{ b \mid (a,b) \in R \}$. 
Given two sets $A,B$, we use $A \setminus B$ to denote the set containing those elements that are present in $A$ but not in $B$, and as usual use $A \cup B$ and $A \cap B$ to represent their union and intersection.
Given a set $R \subseteq A \times A$ and any $i \in \N_{\geq 1}$,  we define $R^i$ recursively as $R^{1} = \{ (a_1, a_2) \mid (a_1, a_2) \in R  \}$ if $i = 1$, and $R^{i} = \{ (a_1, a_2) \mid (a_1, a_3) \in R^{i-1}, \text{ and } (a_3, a_2) \in R   \}$.
That is $R^i$ is the $i$-fold self-composition of the relation with itself.
Given a relation $R \subseteq A \times A$, its transitive closure is defined as the set $R^{+} := \underset{i \in \N_{ \geq 1}}{\bigcup} R^i$. 
Finally, we use logical operators $\wedge$, and $\implies$ as shorthands for conjunction, and implication, respectively.
Given a pair $(a,b) \in A \times B$, we use $\pi_1(a,b) = a$, and $\pi_2(a,b) = b$ to denote projections of these pairs.
We say that a function $f: A \to \R$ is bounded from below (resp. bounded from above) if there exists some $l \in \R$, such that $ l \leq f(a)$ for all $a \in A$ (resp. if there exists some $u \in \R$ such that $f(a) \leq u$ for all $a \in A$).
A function is bounded if it is bounded from below and above.
Similarly, we say a function $f: A \times B \to \R$ is bounded from below (resp. above) in $A$ if, for all $b \in B$ there exist $l \in \R$ (resp. $u \in \R$) such that $l \leq f(a,b)$ ($f(a,b) \leq u$) for all
$a \in A$.
Given a set $A$, denote the set of finite and countably infinite sequences of elements in $A$ by sets $A^{*}$ and $A^{\omega}$ respectively. 
We use the notation $\bm{a} = \langle a_1, a_2, \ldots, a_n \rangle \in A^{*}$ for finite length sequences and $\bm{s} = \langle a_0, a_1, \ldots \rangle\in A^{\omega}$ for infinite-sequences.
Given a sequence $s = \langle a_0, a_1, \ldots \rangle$ we say that the sequence $\langle b_0, b_1, \ldots \rangle$ is a subsequence of $s$ iff $b_i = a_{j_i}$ where $j_0 \leq j_1 \leq \ldots$ for all $i \in \N$, 
Let $\Inf(\bm{s})$ be the set of infinitely often occurring elements in the sequence $\bm{s}  =\langle a_0, a_1, \ldots \rangle$.
Given a possibly infinite sequence $\bm{s} = \langle a_0, a_1, \ldots \rangle$, and two natural numbers $i,j \in \N$ where $i \leq j$, we use $\bm{s}[i,j]$ to indicate the finite sequence $\langle a_i, a_{i+1}, \ldots, a_j \rangle$, and $\bm{s}[i, \infty[$ to indicate the infinite sequence $\langle a_i, a_{i+1}, \ldots \rangle$. 
Finally, we use $\bm{s}[i]$ to denote the $i$th element in the sequence $\bm{s}$ for any $i \in \N$. 
With a slight abuse of notation, following \cite{podelski_2004_transition}, we say that a relation $R \subseteq A \times A$ is well-founded if nthere exists no infinite sequence $\langle a_0, a_1, \ldots \rangle$ such that $(a_i, a_{i+1}) \in R$ for all $i \in \N$.
In this work, we consider a relation $R \subseteq A \times A$ to be well-founded with respect to a set $B \subseteq A$, if for every  infinite sequence $\langle a_0, \ldots \rangle \in A^{\omega}$, where $(a_i, a_{i+1}) \in R$, there does not exist a subsequence $\langle b_0, b_1, \ldots \rangle \in B^{\omega}$.
A relation $R \subseteq A \times A$ is said to be disjunctively well-founded if it is the finite union of well-founded relations, \textit{i.e.}, $R = \bigcup_{i = 1}^{m} R_i$ for some $m \in \N$, where the relations $R_i$ are well-founded. 

\subsection{Discrete-time Control Systems}
\label{subsec:prelims_system}
A discrete-time control system (simply, a control system) $\Sys$ is a tuple $(\Xx,\Xx_0, U, f)$, where $\Xx \subseteq \R^n$ denotes the state set, $\Xx_0 \subseteq X$ denotes a set of initial states, $U \subseteq \R^m$ denotes the set of control inputs, and $f:\Xx \times U \to \Xx$ the state transition function.
We assume that state set of the systems under consideration are compact.
Given a control sequence $\bm{u} = \langle u_0, u_1, \ldots \rangle \in U^{\omega}$, and an initial state $x_0 \in \Xx_0$, the corresponding  \emph{state sequence} is the infinite sequence  $\bm{x}_{\bm{u}} = \langle x_0, x_1, \ldots \rangle \in \Xx^{\omega}$ where and $x_{i+1} = f(x_i, u_i)$, for all $i \in \N$.
In this paper, we consider the systems to be controlled with state feedback controllers, \textit{i.e.}, controllers of the form $\kappa: \Xx \to 2^U$, where for all states $x \in \Xx$ one may select a choice of input \footnote{We should add that in the case of $\omega$-regular specifications, we consider controllers that are state feedback over the product of the system and the desired automaton, as well as controllers with a finite amount of memory.} $u \in \kappa(x)$ or controllers with a finite memory, where $\kappa: \Xx \times \{0, \ldots mem \} \to 2^U$ (cf. Definitions \ref{def:ccc_inf_vis_counter} and \ref{def:ccc_inf_vis_counter_disj}, where we use a counter $j$ as memory). 
Thus, given an initial state $x_0 \in \Xx_0$, a state trajectory of the system under controller $\kappa$ is the infinite sequence $\langle x_0, x_1, \ldots \rangle$
such that $x_{i+1} = f(x_i, u_i)$, where $u_i \in \kappa(x_i)$ for all $i \in \N$.
For a finite alphabet set $\Sigma$, we associate a labeling function $\Ll: \Xx \to \Sigma$ which maps each state of the system to a letter in $\Sigma$. 
This naturally generalizes to mapping a state sequence of the system $\langle x_0, x_1, \ldots \rangle \in \Xx^{\omega}$ to a trace or word $w = \langle \Ll(x_0), \Ll(x_1), \ldots \rangle \in \Sigma^{\omega}$. 
Finally, let $\Tr$ denote the set of all traces of system $\Sys$ under the labeling map $\Ll$ and controller $\kappa$. 
For convenience, when $U$ is singleton, we use $\Sys_{dyn} = (\Xx, \Xx_0, f)$, to denote a dynamical system with constant (or no) input, \textit{i.e.}, $f: \Xx \to \Xx$ is the state transition function.
To motivate the use of closure certificates to synthesize controllers for $\omega$-regular specifications, we first draw an analogy to the use of control barrier certificates to ensure safety. 

\subsection{Safety Verification and Barrier Certificates}
\label{subsec:prelims_safety}
A control system $\Sys = (\Xx, \Xx_0, U, f)$ is safe with respect to a set of unsafe states $\Xx_u$, if no state sequence reaches $\Xx_u$, \textit{i.e.}, for every state sequence  $\langle x_0, x_1, \ldots  \rangle$, we have $x_i \notin \Xx_u$ for all $i \in \N$.
\begin{definition}
\label{def:bar}
    A function $\Bb: \Xx \to \R$ is a control barrier certificate for a system $\Sys$ with respect to a set of unsafe states $\Xx_u$ if: 
    \begin{align}
        & \Bb(x) \leq 0, && \text{ for all } x \in \Xx_0, \label{eq:bar_cond_1} \\
        & \Bb(x) > 0, && \text{ for all } x \in \Xx_u ,\label{eq:bar_cond_2} 
    \end{align}
    and for all $x \in \Xx$, there exists $u \in U$ such that:
    \begin{align}
        & \big( \Bb(x) \leq 0 \big) \implies \big( \Bb(f(x,u)) \leq 0 \big) && \label{eq:bar_cond_3}
    \end{align}
\end{definition}
\begin{theorem}
    Consider a control system $\Sys = (\Xx, \Xx_0, U, f)$, with a set of unsafe states $\Xx_u$. The existence of a function $\Bb:\Xx \to \R$ satisfying conditions~\eqref{eq:bar_cond_1}-\eqref{eq:bar_cond_3} implies that there exists a controller $\kappa$ to ensure that the system is safe.
\end{theorem}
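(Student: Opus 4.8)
The plan is to use the barrier certificate $\Bb$ to \emph{define} a controller directly and then establish safety by a one-step invariance induction on the sublevel set $\set{x \in \Xx : \Bb(x) \leq 0}$, mirroring the classical argument for control barrier certificates.

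First, I would define the state-feedback controller $\kappa : \Xx \to 2^{U}$ by
\[
\kappa(x) := \set{ u \in U : \big( \Bb(x) \leq 0 \big) \implies \big( \Bb(f(x,u)) \leq 0 \big) }, \qquad x \in \Xx.
\]
Condition~\eqref{eq:bar_cond_3} is exactly the statement that $\kappa(x) \neq \emptyset$ for every $x \in \Xx$, so $\kappa$ is well defined; note that whenever $\Bb(x) > 0$ the implication is vacuous and $\kappa(x) = U$. If a single-valued controller is preferred, one picks any selection $u(x) \in \kappa(x)$; no regularity assumption is needed for the claim as stated.

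Second, I would fix an arbitrary $x_0 \in \Xx_0$ and an arbitrary trajectory $\langle x_0, x_1, \ldots \rangle$ of $\Sys$ under $\kappa$, i.e.\ $x_{i+1} = f(x_i,u_i)$ with $u_i \in \kappa(x_i)$, and prove by induction on $i \in \N$ that $\Bb(x_i) \leq 0$. The base case $\Bb(x_0) \leq 0$ is precisely~\eqref{eq:bar_cond_1}, since $x_0 \in \Xx_0$. For the inductive step, assuming $\Bb(x_i) \leq 0$, the defining property of $\kappa(x_i)$ applied to the chosen input $u_i$ gives $\Bb(x_{i+1}) = \Bb(f(x_i,u_i)) \leq 0$. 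Finally, combining this invariant with~\eqref{eq:bar_cond_2}: for every $x \in \Xx_u$ we have $\Bb(x) > 0$, while $\Bb(x_i) \leq 0$ for all $i$, so $x_i \notin \Xx_u$ for all $i \in \N$; as the initial state and the trajectory were arbitrary, $\Sys$ under $\kappa$ is safe with respect to $\Xx_u$.

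There is essentially no deep obstacle here: the argument is a routine invariance induction. The only point that warrants care is the quantifier ordering in~\eqref{eq:bar_cond_3} — the input is allowed to depend on the current state — which is why the controller must be constructed state-wise and why the implication in the definition of $\kappa(x)$ is (harmlessly) vacuous outside the sublevel set $\set{x : \Bb(x) \leq 0}$. This is the template we will lift to the $\omega$-regular setting in the sequel, replacing the static sublevel-set invariant by the disjunctive well-foundedness arguments carried by control closure certificates.
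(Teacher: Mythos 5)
Your proof is correct and is exactly the standard invariance-induction argument that this theorem rests on (the paper states it without proof, as it is the classical control barrier certificate result): define $\kappa$ from condition~\eqref{eq:bar_cond_3}, induct to keep trajectories in the sublevel set $\set{x : \Bb(x) \leq 0}$ using~\eqref{eq:bar_cond_1}, and conclude avoidance of $\Xx_u$ from~\eqref{eq:bar_cond_2}. No gaps; your remark on the quantifier order (input chosen per state) is the right point of care.
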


\subsection{B\"uchi and Parity automata}
\label{subsec:prelims_aut}
We now discuss some classes of $\omega$-regular automata to capture our specifications of interest.
An $\omega$-regular automaton $\Aa$ is a tuple $(\Sigma,Q, Q_0, \delta,Acc)$, where 
$\Sigma$ denotes a finite alphabet, $Q$ a finite set of states, $Q_0 \subseteq Q$ an initial set of states, $\delta \subseteq Q \times \Sigma \times Q$ the transition relation, and $Acc$ denotes its accepting condition.
If the accepting condition is B\"uchi, then we call that automaton a nondeterministic B\"uchi automaton (NBA), and we have $Acc \subseteq Q$.
An automaton is a nondeterminstic parity automaton (NPA) if the accepting condition $Acc: Q \to \{1, \ldots, c\}$ maps each state $q \in Q$ to some color (denoted by a natural number).
A run of the automaton $\Aa = (\Sigma,Q, q_0, \delta, Acc)$ over a word $w = \langle \sigma_0, \sigma_1, \sigma_2 \ldots \rangle \in \Sigma^{\omega}$, is an infinite sequence of states $\rho = \langle q_0,q_1, q_2, \ldots \rangle \in Q^{\omega}$ with $q_0 \in Q_0$ and $q_{i+1} \in \delta(q_i, \sigma_i)$.
An NBA $\Aa = (\Sigma,Q, Q_0, \delta, Acc)$ is said to accept a word $w$, if there exists a run $\rho$ on $w$ where $\Inf(\rho) \cap Acc \neq \emptyset$.
An NPA $\Aa = (\Sigma,Q, Q_0, \delta, Acc)$ is said to accept a word $w$, if there exists a run $\rho$ on $w$ if we have the minimum priority seen infinitely often in $\rho$ is even (equivalently one may consider maximum priorities, or colors that are odd).
We denote the set of words accepted by an automaton $\Aa$ (the language of the automaton) as $L(\Aa)$.
Finally, we say that an automaton is \emph{deterministic} if $|Q_0| = 1$ and for all $q \in Q$, and $\sigma \in \Sigma$, we have $| \delta(q, \sigma)| \leq 1$.
We use  DBA or DPA to denote determinstic B\"uchi or determinstic parity automata, respectively.
As $|Q_0| = 1$, we use $q_0$ to denote the initial state for a DPA or DBA in the tuple, \textit{i.e.}, a DPA is of the form $\Aa = (\Sigma, Q, q_0, \delta, Acc)$.
Note that both NBAs and DPAs are closed under complementation~\cite{safra_1988_complexity}: given an NBA (resp. DPA) $\Aa = (\Sigma,Q, Q_0, \delta, Acc)$, there exists an NBA (resp. DPA) $\Aa' = (\Sigma, Q', Q'_0, \delta', Acc)$ such that $L(\Aa') = \overline{L(\Aa)}$.

\subsection{Problem Statement}
\label{sec:statement}
The key problem we consider is as follows: Given a control system $\Sys = (\Xx, \Xx_0, U ,f)$, a deterministic parity automaton $\Aa = (\Sigma, Q, Q_0, \delta, Acc)$, and a labeling map $\Ll: \Xx \to \Sigma$, find a controller $\kappa: \Xx \times Q \to 2^U$ such that $\Tr \subseteq L(\Aa)$.
To tackle this problem we introduce a notion of control closure certificates.

\section{Control Closure Certificates (C$^3$s)}
\label{sec:ccc}
Typical inductive state invariants seek to overapproximate the reachable states of a system. 
Transition invariants \cite{podelski_2004_transition}, on the other hand, seek to to overapproximate the reachable transitions of a system.
Such invariants can be inductively characterized as follows.
Given a dynamical system $\Sys_{dyn} = (\Xx, \Xx_0, f)$, a relation $R \subseteq \Xx \times \Xx$ is a transition invariant if:
\begin{align}
    & (x, f(x)) \in R && \text{ for all } x \in \Xx, \text{ and } \nonumber \\
    & ((f(x), y) \in R) \implies (x, y) \in R && \text{ for all } x, y \in \Xx. \nonumber
\end{align}
Building on this intuition, the results in \cite{murali2024closure} considered a notion of closure certificates that act as functional transition invariants.
It was demonstrated that such functions can be used to verify $\omega$-regular properties as well as provide simpler templates of functions compared to existing approaches.
In the following sections, we describe how closure certificates can be used to design controllers to ensure a set is visited either  only finitely often or infinitely often. 
We then demonstrate how one may combine these conditions to design controllers to satisfy objectives specified by parity automata.
We should add that as parity, Rabin, and Streett are equally expressive, one can effectively consider alternate conditions for Rabin or Streett automata.
We omit these conditions but note that one might combine certificates in a similar fashion as parity automata.
We first start with designing $\text{C}^3$s to synthesize controllers that ensure a given set is visited only finitely often. 
 
\subsection{C$^3$s for Finite Visits}
\label{subsec:ccc_cobuchi}
In this section, our objective is to design a controller to ensure that a system visits a set $\Xx_{VF} \subseteq \Xx$ only finitely often via $\text{C}^3$s.
First, we discuss how one may leverage disjunctive well-foundedness to verify such a condition as follows.
\begin{definition}
    \label{def:tbar_fin_vis_verif}
    Consider a dynamical system $\Sys_{dyn} = (\Xx, \Xx_0, f)$ and a set of states $\Xx_{VF} \subseteq \Xx$ that must be visited only finitely often.
    Let $\Xx_{VF}$ be partitioned into sets $\Xx_{VF_{1}}, \ldots, \Xx_{VF_{p}}$, \textit{i.e.}, $\Xx_{VF} = \underset{0 \leq i \leq p}{\bigcup} \Xx_{VF_i}$ for some $p \in \N$. 
    Then, function $\Tt: \Xx \times \Xx \to \R$, and bounded (from below) functions $\Vv_i: \Xx \to \R_{\geq 0}$ for all $0 \leq i \leq p$ are a disjunctive closure certificate if for all $x,y \in \Xx$:
    \begin{align}
        &\big( \Tt(x, f(x)) \geq 0 \big), \label{eq:tbar_cond_1_fin_vis_verif} \\
        & \big( \Tt(f(x), y) \geq 0 \big) \implies \big( \Tt(x, y) \geq 0 \big), \label{eq:tbar_cond_2_fin_vis_verif}
    \end{align}
    and for all $x_0 \in \Xx_0$, and any $0 \leq i \leq p$, there exists $\xi_i \in \R_{ > 0}$ such that for all $z,z' \in \Xx_{VF_{i}}$, we have:
    \begin{align}
            &\big( \Tt(x_0, z) \geq 0 \big) \wedge \big( \Tt(z,z') \geq 0 \big) \implies \big( \Vv_i(x_0, z') \leq \Vv_i(x_0, z) - \xi_i \big).   \label{eq:tbar_cond_3_fin_vis_verif}    
    \end{align}
\end{definition}
\begin{lemma}
    \label{lem:cc_fin_vis}
    Consider a dynamical system $\Sys_{dyn} = (\Xx, \Xx_0, f)$ and a set $\Xx_{VF}$ that must be visited only finitely often.
    The existence of functions $\Tt$ and $\Vv$ as in Definition \ref{def:ccc_fin_vis} guarantees that  $\Xx_{VF}$ is visited only finitely often.
\end{lemma}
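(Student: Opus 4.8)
The plan is to argue by contradiction. Assuming $\Xx_{VF}$ is visited infinitely often along some trajectory, I will extract an infinite subsequence of visits that all land in a single block $\Xx_{VF_i}$ of the partition, and then use condition~\eqref{eq:tbar_cond_3_fin_vis_verif} to force $\Vv_i$ to decrease by at least $\xi_i > 0$ between consecutive elements of this subsequence, driving $\Vv_i$ to $-\infty$ and contradicting its boundedness from below.

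The first step is to show that conditions~\eqref{eq:tbar_cond_1_fin_vis_verif} and~\eqref{eq:tbar_cond_2_fin_vis_verif} make $\Tt \ge 0$ hold over the transitive closure of the one-step transition relation: for every trajectory $\langle x_0, x_1, \ldots \rangle$ with $x_0 \in \Xx_0$ and all indices $i < j$, we have $\Tt(x_i, x_j) \ge 0$. I would prove this by induction on $j - i$. For $j = i+1$, condition~\eqref{eq:tbar_cond_1_fin_vis_verif} applied at $x_i$ gives $\Tt(x_i, f(x_i)) = \Tt(x_i, x_{i+1}) \ge 0$. For $j > i+1$, the induction hypothesis gives $\Tt(x_{i+1}, x_j) = \Tt(f(x_i), x_j) \ge 0$, so condition~\eqref{eq:tbar_cond_2_fin_vis_verif} with $x = x_i$ and $y = x_j$ yields $\Tt(x_i, x_j) \ge 0$. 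In particular $\Tt(x_0, x_k) \ge 0$ for all $k \ge 1$.

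Next, suppose for contradiction that some trajectory $\bm{x} = \langle x_0, x_1, \ldots \rangle$ with $x_0 \in \Xx_0$ satisfies $x_k \in \Xx_{VF}$ for infinitely many $k$; since at most one such index can be $0$, infinitely many of them are $\ge 1$. As $\Xx_{VF} = \bigcup_{0 \le i \le p} \Xx_{VF_i}$ is a finite partition, the pigeonhole principle yields an index $i$ and an infinite increasing sequence $1 \le k_0 < k_1 < \cdots$ with $z_\ell := x_{k_\ell} \in \Xx_{VF_i}$ for all $\ell \in \N$. By the previous step, $\Tt(x_0, z_\ell) \ge 0$ (since $k_\ell \ge 1$) and $\Tt(z_\ell, z_{\ell+1}) = \Tt(x_{k_\ell}, x_{k_{\ell+1}}) \ge 0$ (since $k_\ell < k_{\ell+1}$). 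As $x_0 \in \Xx_0$ and $z_\ell, z_{\ell+1} \in \Xx_{VF_i}$, condition~\eqref{eq:tbar_cond_3_fin_vis_verif} applies and gives $\Vv_i(x_0, z_{\ell+1}) \le \Vv_i(x_0, z_\ell) - \xi_i$ for every $\ell$, so telescoping yields $\Vv_i(x_0, z_\ell) \le \Vv_i(x_0, z_0) - \ell\, \xi_i \to -\infty$. This contradicts the boundedness of $\Vv_i$ from below, so no trajectory visits $\Xx_{VF}$ infinitely often.

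I expect the only real subtlety to be the first step: verifying that~\eqref{eq:tbar_cond_1_fin_vis_verif}--\eqref{eq:tbar_cond_2_fin_vis_verif} propagate $\Tt \ge 0$ along the entire transitive closure of the one-step relation (this is precisely the mechanism by which closure certificates over-approximate reachable transitions), together with the bookkeeping needed so that~\eqref{eq:tbar_cond_3_fin_vis_verif} is only invoked with $x_0 \in \Xx_0$ and with consecutive visits to the \emph{same} block $\Xx_{VF_i}$. The remaining ingredients --- the pigeonhole argument over the finite partition, telescoping the descent inequality, and the final contradiction with boundedness --- are routine.
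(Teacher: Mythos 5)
Your proof is correct and takes essentially the same route as the paper's: propagate $\Tt \geq 0$ over all time-ordered pairs of the trajectory by induction on conditions \eqref{eq:tbar_cond_1_fin_vis_verif}--\eqref{eq:tbar_cond_2_fin_vis_verif}, extract an infinite subsequence of visits lying in a single block $\Xx_{VF_i}$, and telescope the descent from \eqref{eq:tbar_cond_3_fin_vis_verif} against the lower bound on $\Vv_i$. The only difference is cosmetic: you use the pigeonhole principle where the paper cites Ramsey's theorem, and since the disjunction here is over a partition of \emph{states} (not of transition pairs, as in the classical transition-invariant argument), pigeonhole indeed suffices.
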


The proof of Lemma \ref{lem:cc_fin_vis} can be found in Appendix \ref{ap:pr_lem_cc_fin}.

A direct approach to consider a notion of control closure certificates is to add an existentially quantified control input to conditions \eqref{eq:tbar_cond_1_fin_vis_verif} and \eqref{eq:tbar_cond_2_fin_vis_verif} as follows.

\begin{definition}
    \label{def:ccc_fin_vis_alt}
    Consider a control system $\Sys = (\Xx, \Xx_0, U, f)$ and a set of states $\Xx_{VF}$ that must be visited only finitely often.
    Let $\Xx_{VF}$ be partitioned into sets $\Xx_{VF_{1}}, \ldots, \Xx_{VF_{p}}$, \textit{i.e.}, $\Xx_{VF} = \underset{0 \leq i \leq p}{\bigcup} \Xx_{VF_i}$ for some $p \in \N$. 
    Then, function $\Tt: \Xx \times \Xx \to \R$, and bounded (from below) functions $\Vv_i: \Xx \times \Xx \to \R_{\geq 0}$ for all $0 \leq i \leq p$ are a control closure certificate if for all $x \in \Xx$ there exists $u \in U$ such that for all $y \in Y$ we have:
    \begin{align}
        &\big( \Tt(x, f(x,u)) \geq 0 \big), \label{eq:tbar_cond_1_fin_vis_alt} \\
        & \big( \Tt(f(x,u), y) \geq 0 \big) \implies \big( \Tt(x, y) \geq 0 \big), \label{eq:tbar_cond_2_fin_vis_alt}
    \end{align}
    and for all $x_0 \in \Xx_0$, $0 \leq i \leq p$, and for all $z,z' \in \Xx_{VF_{i}}$, there exists $\xi_i \in \R_{ > 0}$ such that:
    \begin{align}
            &\big( \Tt(x_0, z) \geq 0 \big) \wedge \big( \Tt(z,z') \geq 0 \big) \implies
        \big( \Vv_i(x_0, z') \leq \Vv_i(x_0, z) - \xi_i \big).   \label{eq:tbar_cond_3_fin_vis_alt}         
    \end{align}
\end{definition}

Observe that the conditions above rely on two quantifier alternations over the state set and input set (between $x$ and $u$, and $u$ and $y$) rather than one.
We now show how one can avoid this alternation by considering an alternative paradigm where we define C$^3$s as follows.

\begin{definition}
    \label{def:ccc_fin_vis}
    Consider a control system $\Sys = (\Xx, \Xx_0, U, f)$ and a set of states $\Xx_{VF}$ that must be visited only finitely often.
    Let $\Xx_{VF}$ be partitioned into sets $\Xx_{VF_{1}}, \ldots, \Xx_{VF_{p}}$, \textit{i.e.}, $\Xx_{VF} = \underset{0 \leq i \leq p}{\bigcup} \Xx_{VF_i}$ for some $p \in \N$. 
    Then, function $\Tt: \Xx \times \Xx \to \R$, and bounded (from below) functions $\Vv_i: \Xx \to \R_{\geq 0}$ for all $0 \leq i \leq p$ are a C$^3$ if for all $x \in \Xx$ there exists $u \in U$ such that:
    \begin{align}
        &\big( \Tt(x, f(x,u)) \geq 0 \big), \label{eq:tbar_cond_1_fin_vis}
    \end{align}
    and for all $x, y \in \Xx$,  for all $u \in U$, we have:
    \begin{align}
        & \big(  \Tt(x, f(x,u) \big) \geq 0 \big) \implies \Big( \big( \Tt(f(x,u), y) \geq 0 \big) \implies \big( \Tt(x, y) \geq 0 \big) \Big), \label{eq:tbar_cond_2_fin_vis}
    \end{align}
    and for all $x_0 \in \Xx_0$, $0 \leq i \leq p$, and for all $z,z' \in \Xx_{VF_{i}}$, there exists $\xi_i \in \R_{ > 0}$ such that:
    \begin{align}
            &\big( \Tt(x_0, z) \geq 0 \big) \wedge \big( \Tt(z,z') \geq 0 \big) \implies 
         \big( \Vv_i(x_0,z') \leq \Vv_i(x_0,z) - \xi_i \big).   \label{eq:tbar_cond_3_fin_vis}         
    \end{align}
\end{definition}

\begin{theorem}
    \label{thm:ccc_fin_vis}
    Consider a control system $\Sys = (\Xx, \Xx_0, U, f)$ and a set $\Xx_{VF}$ that must be visited only finitely often.
    The existence of functions $\Tt$ and $\Vv$ as in Definition \ref{def:ccc_fin_vis} guarantees that there exists a controller $\kappa$ to ensure that $\Xx_{VF}$ is visited only finitely often.
\end{theorem}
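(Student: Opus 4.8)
The plan is to reduce this to Lemma~\ref{lem:cc_fin_vis} by first \emph{extracting a controller} from condition~\eqref{eq:tbar_cond_1_fin_vis} and then showing that, once this controller is fixed, the triple $(\Tt, \Vv_0, \ldots, \Vv_p)$ satisfies the hypotheses of the disjunctive closure certificate of Definition~\ref{def:tbar_fin_vis_verif} for the resulting dynamical system $\Sys_{dyn}^{\kappa} = (\Xx, \Xx_0, f_{\kappa})$. Concretely, for each $x \in \Xx$ let $\kappa(x)$ be a choice of input $u \in U$ witnessing~\eqref{eq:tbar_cond_1_fin_vis}; define $f_{\kappa}(x) := f(x, \kappa(x))$. (If one wants to stay measurable/selection-theoretic, invoke the compactness of $\Xx$ and continuity of $f$ as assumed in Section~\ref{subsec:prelims_system}; but for the purposes of the theorem statement any selector suffices, since the lemma it relies on is purely combinatorial over trajectories.)

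The key steps are then as follows. First, condition~\eqref{eq:tbar_cond_1_fin_vis_verif} of Definition~\ref{def:tbar_fin_vis_verif} holds for $f_{\kappa}$ because $\Tt(x, f_{\kappa}(x)) = \Tt(x, f(x,\kappa(x))) \geq 0$ by the defining property of $\kappa$. Second, condition~\eqref{eq:tbar_cond_2_fin_vis_verif}, namely $\Tt(f_{\kappa}(x), y) \geq 0 \implies \Tt(x,y) \geq 0$ for all $x, y \in \Xx$, follows from~\eqref{eq:tbar_cond_2_fin_vis}: instantiate that condition with $u = \kappa(x)$; its premise $\Tt(x, f(x,\kappa(x))) \geq 0$ is exactly what $\kappa$ guarantees, so the inner implication $\Tt(f(x,\kappa(x)), y) \geq 0 \implies \Tt(x,y) \geq 0$ is unlocked, and $f(x,\kappa(x)) = f_{\kappa}(x)$. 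Third, condition~\eqref{eq:tbar_cond_3_fin_vis_verif} is literally~\eqref{eq:tbar_cond_3_fin_vis} — it does not mention $f$ or $u$ at all, only $\Tt$, the $\Vv_i$, the partition $\{\Xx_{VF_i}\}$, and the constants $\xi_i$ — so it transfers verbatim. Hence $(\Tt, \Vv_0, \ldots, \Vv_p)$ is a disjunctive closure certificate for $\Sys_{dyn}^{\kappa}$, and Lemma~\ref{lem:cc_fin_vis} gives that $\Xx_{VF}$ is visited only finitely often along every trajectory of $\Sys_{dyn}^{\kappa}$. Since every state trajectory of $\Sys$ under $\kappa$ is exactly a state sequence of $\Sys_{dyn}^{\kappa}$, this is precisely the claim.

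The main obstacle I anticipate is the transition from the \emph{parametrized} implications~\eqref{eq:tbar_cond_1_fin_vis}--\eqref{eq:tbar_cond_2_fin_vis}, where the input is existentially then universally quantified (the single-alternation formulation), to the \emph{closed} transition invariant needed by the lemma: one must be careful that the $u$ chosen to satisfy the existential in~\eqref{eq:tbar_cond_1_fin_vis} is the \emph{same} $u$ plugged into~\eqref{eq:tbar_cond_2_fin_vis}, which is why~\eqref{eq:tbar_cond_2_fin_vis} is stated with the guard $\Tt(x, f(x,u)) \geq 0$ as an antecedent rather than unconditionally — this guard is exactly what makes the universal-over-$u$ statement consistent with picking the witness from~\eqref{eq:tbar_cond_1_fin_vis}. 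A minor secondary point worth a sentence is noting that the $\Vv_i$ here are functions of $(x_0, \cdot)$ through the partition indices but the lemma statement's $\Vv$ is the tuple $(\Vv_0, \ldots, \Vv_p)$; the boundedness-from-below of each $\Vv_i$ on $\Xx$ is assumed in Definition~\ref{def:ccc_fin_vis} and is what the lemma needs to rule out an infinite descent within any single $\Xx_{VF_i}$, so the pigeonhole over the finitely many cells closes the argument — but all of that is inside Lemma~\ref{lem:cc_fin_vis}, which we are permitted to cite.
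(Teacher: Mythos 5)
Your proposal is correct and follows essentially the same route as the paper: the controller is exactly a selector witnessing the existential in \eqref{eq:tbar_cond_1_fin_vis}, the guard in \eqref{eq:tbar_cond_2_fin_vis} then yields transitivity of $\Tt$ along closed-loop trajectories, and the disjunctive well-foundedness step (pigeonhole/Ramsey over the cells $\Xx_{VF_i}$ together with boundedness from below of the $\Vv_i$ under \eqref{eq:tbar_cond_3_fin_vis}) closes the contradiction. The only difference is organizational: the paper re-derives that final argument inline by contradiction rather than citing Lemma~\ref{lem:cc_fin_vis} for the closed-loop system $f(\cdot,\kappa(\cdot))$, so your reduction is just a cleaner packaging of the same proof.
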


\begin{proof}
    We prove Theorem \ref{thm:ccc_fin_vis} via contradiction.
    To do so, assume that there exists an initial state $x_0 \in \Xx_0$, such that for all input sequences $\bm{u}  = \langle u_0, u_1, \ldots \rangle $, we have the corresponding state sequence $\bm{x}_{\bm{u}} = \langle x_0, x_1, \ldots \rangle$ to visit the set $\Xx_{VF}$ infinitely often.
    Consider the control input sequence selected such that $\Tt(x_i, f(x_i, u_i) ) \geq 0$ for all $i \in \N$.
    As condition \eqref{eq:tbar_cond_1_fin_vis} holds, this is true for any $x_i$ in the state sequence.
    Following conditions \eqref{eq:tbar_cond_2_fin_vis} and \eqref{eq:tbar_cond_1_fin_vis} and via induction, we have $
\Tt(x_0, x_i) \geq 0$ and $\Tt(x_i, x_{j}) \geq 0$ for all $i \in \N$, and all $j \geq (i+1)$.
Let $\langle y_0, y_1, \ldots \rangle$ be the subsequence that visits $\Xx_{VF}$ only finitely often.
That is the state sequence is of the form $\bm{x}_{\bm{u}} = \langle x_0, x_1, \ldots, y_0, \ldots, y_1, \ldots \rangle$.
Via Ramsey's theorem \cite{ramsey1987problem}, there exists a subsequence $\langle z_0, z_1, \ldots \rangle \in \Xx_{VF_i}$ that visits $\Xx_{VF_i}$ infinitely often for some $0 \leq i \leq p$.
From the previous results, we know that $\Tt(x_0, z_i) \geq 0$ and $\Tt(z_i, z_j) \geq 0$ for all $i \in \N$, and all $j \geq (i+1)$.
Let $ \Vv_i^* := \Vv_i(x_0,z_0) $ and as function $\Vv_i$ is bounded from below let the lower bound be $\Vv_i^{\dagger}$. 
Following condition \eqref{eq:tbar_cond_3_fin_vis} and via induction, we have $\Vv_i(x_0,z_j) \leq  \Vv_i(x_0, z_0) - j\xi_i $ for all $j \in \N_{ \geq 1}$.
Thus there exists some $j \in \N$, such that $\Vv_i(z_j) \leq  \Vv_i^* - j\xi_i < \Vv_i^{\dagger} $ which is a contradiction.
\qed
\end{proof}

Observe that one may consider the set $\Xx_{VF}$ to not be partitioned, in which case one recovers the standard conditions of well-foundedness as in \cite[Definition 3.2]{murali2024closure}.
In the following section, we define  $\text{C}^3$s to synthesize a controller for a control system $\Sys = ( \Xx, \Xx_0, U, f)$ to show a set is visited infinitely often.

\subsection{C$^3$s for Infinite Visits}
\label{subsec:ccc_buchi}
In this section, our objective is to design a controller to ensure that the system visits a set $\Xx_{INF} \subseteq \Xx$ infinitely often via $\text{C}^3$s.
We discuss a few approaches to do so, and motivate the use of each successive approach by discussing how they tackle issues with respect to the other.
Overall, the three approaches we consider are as follows:
\begin{itemize}
    \item [1.] We first consider a standard ranking function over transition invariants (Definition \ref{def:ccc_inf_vis_rec} following \cite{wang2025verifying} .
    \item [2.] Unfortunately, a standard ranking function is not disjunctively well-founded. Thus we introduce proofs that are disjunctive (Definition \ref{def:ccc_inf_vis_sing}) but this relies on lookaheads.
    \item [3.] We introduce conditions dependent on  counters rather than lookaheads in Definition \ref{def:ccc_inf_vis_counter}, when the number of states that are not in $X_{INF}$  between successive visits to $X_{INF}$ increases.
    \item  [4.] When the number of states that are not in $X_{INF}$  between successive visits to $X_{INF}$ increases up to a threshold, and then remains bounded, one can use an approach similar to bounded model checking (Definition \ref{def:ccc_inf_vis_counter_fin_piece}).
\end{itemize}

An initial approach to design controllers is to modify the notion of closure certificates for recurrence used in \cite{wang2025verifying} by considering the control input as follows.
\begin{definition}
    \label{def:ccc_inf_vis_rec}
    Consider a control system $\Sys = (\Xx, \Xx_0, U, f)$ and a set of states $\Xx_{INF}$ that must be visited infinitely often.
    Then a bounded function $\Tt: \Xx \times \Xx \to \R$, is a control closure certificate for recurrence, if there exists $\xi \in \R_{ > 0}$ such that for all $x \in \Xx$ there exists $u \in U$ such that:
    \begin{align}
        &\big( \Tt(x, f(x,u)) \geq 0 \big), \label{eq:tbar_cond_1_inf_vis_recur}
    \end{align}
    and for all $x, y \in \Xx$, and for all $u \in U$, we have:
    \begin{align}
        & \big(  \Tt(x, f(x,u) \big) \geq 0 \big) \implies  \Big( \big( \Tt(f(x,u), y) \geq 0 \big) \implies \big( \Tt(x, y) \geq 0 \big) \Big), \label{eq:tbar_cond_2_inf_vis_recur}
    \end{align}
    and for all $x_0 \in \Xx_0$, $y \in \Xx \setminus \Xx_{INF} $, and $u \in U$ we have:
    \begin{align}
            &\big( \Tt(x_0, y) \geq 0 \big) {\wedge} \big( \Tt(y,f(y,u)) \geq 0 \big) \big) {\implies} \big( \Tt(x_0, f(y,u) ) \geq \Tt(x_0,y) + \xi \big).  \label{eq:tbar_cond_3_inf_vis_recur} 
    \end{align}
\end{definition}
\begin{theorem}
\label{thm:ccc_inf_vis_rec}
 Consider a control system $\Sys = (\Xx, \Xx_0, U, f)$ and a set $\Xx_{INF}$ that must be visited infinitely often.
The existence of a function $\Tt$ as in Definition \ref{def:ccc_inf_vis_rec} guarantees that there exists a controller $\kappa$ to ensure that the set $\Xx_{INF}$ is visited infinitely often.
\end{theorem}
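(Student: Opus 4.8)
The plan is to prove Theorem \ref{thm:ccc_inf_vis_rec} by contradiction, following the same skeleton as the proof of Theorem \ref{thm:ccc_fin_vis} but now exploiting the \emph{boundedness from above} of $\Tt$ together with condition \eqref{eq:tbar_cond_3_inf_vis_recur}. First I would define the controller: for each $x \in \Xx$, pick some $u \in U$ witnessing \eqref{eq:tbar_cond_1_inf_vis_recur}, i.e.\ with $\Tt(x, f(x,u)) \geq 0$, and let $\kappa(x)$ be that choice. Now suppose for contradiction that there is an initial state $x_0 \in \Xx_0$ whose trajectory $\bm{x} = \langle x_0, x_1, \ldots \rangle$ under $\kappa$ visits $\Xx_{INF}$ only finitely often. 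Then there exists $N \in \N$ such that $x_k \in \Xx \setminus \Xx_{INF}$ for all $k \geq N$.

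Next I would run the inductive transitivity argument exactly as in the earlier proof: since every consecutive pair satisfies $\Tt(x_i, f(x_i,u_i)) = \Tt(x_i, x_{i+1}) \geq 0$, conditions \eqref{eq:tbar_cond_1_inf_vis_recur} and \eqref{eq:tbar_cond_2_inf_vis_recur} give, by induction on $j - i$, that $\Tt(x_i, x_j) \geq 0$ for all $i \in \N$ and all $j \geq i+1$. In particular $\Tt(x_0, x_k) \geq 0$ for all $k \geq 1$. Then, for every $k \geq \max(N,1)$ we have $x_k \in \Xx \setminus \Xx_{INF}$, $\Tt(x_0, x_k) \geq 0$, and $\Tt(x_k, x_{k+1}) \geq 0$ with $x_{k+1} = f(x_k, u_k)$, so condition \eqref{eq:tbar_cond_3_inf_vis_recur} applies and yields $\Tt(x_0, x_{k+1}) \geq \Tt(x_0, x_k) + \xi$. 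Chaining this from $k = N' := \max(N,1)$ onward gives $\Tt(x_0, x_{N' + m}) \geq \Tt(x_0, x_{N'}) + m\xi$ for all $m \in \N$, so $\Tt(x_0, x_{N'+m}) \to \infty$. Since $\Tt$ is bounded from above by some $u^{*} \in \R$, choosing $m$ with $m\xi > u^{*} - \Tt(x_0, x_{N'})$ contradicts $\Tt(x_0, x_{N'+m}) \leq u^{*}$. Hence no such $x_0$ exists, and $\kappa$ forces $\Xx_{INF}$ to be visited infinitely often along every trajectory.

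The only subtle point, and the one I would be careful about, is the bookkeeping on where the witnessing input is required versus where it is merely allowed. Condition \eqref{eq:tbar_cond_1_inf_vis_recur} guarantees a \emph{good} input exists at every state, so the controller $\kappa$ is well defined and produces a trajectory along which all one-step $\Tt$-values are nonnegative; conditions \eqref{eq:tbar_cond_2_inf_vis_recur} and \eqref{eq:tbar_cond_3_inf_vis_recur} are universally quantified over $u$ (guarded by $\Tt(x,f(x,u)) \geq 0$), so they fire for the specific inputs actually chosen by $\kappa$. This is exactly why the one-alternation formulation is enough: the transitivity and the strict-increase steps do not need to re-guess the input. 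I do not anticipate a genuine obstacle here, since the structure mirrors Theorem \ref{thm:ccc_fin_vis} with ``decreasing $\Vv_i$ bounded below'' replaced by ``increasing $\Tt(x_0,\cdot)$ bounded above''; the place to be precise is simply that the monotone accumulation of $\xi$ only needs to hold from the index $N'$ after which the trajectory stays outside $\Xx_{INF}$, which is guaranteed by the ``finitely often'' assumption used for the contradiction.
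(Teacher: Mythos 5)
Your proof is correct, and it is essentially the argument the paper intends: the paper omits the proof and defers to the analogous result in the recurrence-certificate literature, whose structure is exactly your contradiction argument — build the trajectory with the witnessing inputs, propagate $\Tt(x_0,x_k)\geq 0$ by induction via \eqref{eq:tbar_cond_1_inf_vis_recur}--\eqref{eq:tbar_cond_2_inf_vis_recur}, then use \eqref{eq:tbar_cond_3_inf_vis_recur} to force an unbounded increase of $\Tt(x_0,\cdot)$ once the trajectory leaves $\Xx_{INF}$ for good, contradicting boundedness of $\Tt$ (the same tension the paper itself highlights in Figure \ref{fig:inf_vis_chain} and Lemma \ref{lem:recur_not_complete}). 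Your bookkeeping on where the existential input is chosen versus where \eqref{eq:tbar_cond_2_inf_vis_recur} and \eqref{eq:tbar_cond_3_inf_vis_recur} fire for that chosen input is exactly right.
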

We omit the proof of Theorem \ref{thm:ccc_inf_vis_rec} as it follows in a similar fashion as the proof of \cite[Theorem 2]{wang2025verifying}.
While $\text{C}^3$s as in Definition \ref{def:ccc_inf_vis_rec} provide an effective automated approach to synthesize a controller that ensures a set is visited infinitely often, they unfortunately face two drawbacks.
First, there are some systems for which one is unable to satisfy conditions \eqref{eq:tbar_cond_1_inf_vis_recur}-\eqref{eq:tbar_cond_3_inf_vis_recur} even though the set $\Xx_{INF}$ is visited infinitely often.
Second, the above does not rely on transition invariants to provide well-founded arguments and thus
one may provide a similar argument with other invariants such as barrier certificates as illustrated in \cite{chatterjee2024sound}.
The key benefit of this approach then relies on the expressivity of the transition invariant, compared to the state invariant.
To illustrate the first drawback we consider the following Lemma.
\begin{lemma}
    \label{lem:recur_not_complete}
    There exists a dynamical system $\Sys_{dis} = (\Xx, \Xx_0, f)$, that visits a region $\Xx_{INF} \subseteq \Xx$ infinitely often, however one is unable to find a control closure certificate for recurrence as in Definition \ref{def:ccc_inf_vis_rec}. 
\end{lemma}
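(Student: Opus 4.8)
The plan is to exhibit a concrete one-dimensional (or low-dimensional) dynamical system with singleton input that visits $\Xx_{INF}$ infinitely often along every trajectory, yet for which no bounded $\Tt$ can satisfy conditions \eqref{eq:tbar_cond_1_inf_vis_recur}--\eqref{eq:tbar_cond_3_inf_vis_recur}. The intuition driving the construction is that condition \eqref{eq:tbar_cond_3_inf_vis_recur} forces $\Tt(x_0,\cdot)$ to strictly increase by at least $\xi$ at every step the trajectory spends \emph{outside} $\Xx_{INF}$; since $\Tt$ is required to be bounded, this is only sustainable if the number of consecutive steps outside $\Xx_{INF}$ between successive visits is \emph{uniformly bounded}. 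So I would choose a system where the gaps between visits to $\Xx_{INF}$ are finite but grow without bound. A clean candidate: let $\Xx = [0,1]$ (compact, as assumed), let $\Xx_{INF} = \{0\}$ (or a small neighborhood of $0$), and design $f$ so that the orbit from the initial state does something like $0 \mapsto$ a point, then walks back toward $0$ in $k$ steps, returns to $0$, then walks out and back in $k+1$ steps, and so on — a single trajectory that returns to $0$ infinitely often but with strictly increasing return times. Concretely one can encode this with a state that carries a "countdown" component, e.g. $\Xx \subseteq [0,1]$ with the dynamics realized via a nested-interval or binary-expansion trick so that $f$ is a well-defined (even continuous, if desired for later SOS relevance) map with the prescribed orbit structure.

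The key steps, in order, are: (1) define the system precisely and verify that every trajectory (here there is essentially one, from the unique relevant initial state, since the input is singleton) visits $\Xx_{INF}$ infinitely often — this is immediate from the construction since the return times, though increasing, are all finite; (2) assume for contradiction that a bounded $\Tt$ and $\xi > 0$ satisfy \eqref{eq:tbar_cond_1_inf_vis_recur}--\eqref{eq:tbar_cond_3_inf_vis_recur}; (3) fix the initial state $x_0$ and consider a block of the trajectory $\langle z_0, z_1, \dots, z_k \rangle$ consisting of $k$ consecutive states all lying in $\Xx \setminus \Xx_{INF}$ — such blocks exist for arbitrarily large $k$ by construction; (4) establish by induction, using \eqref{eq:tbar_cond_1_inf_vis_recur} and \eqref{eq:tbar_cond_2_inf_vis_recur}, that $\Tt(x_0, z_j) \geq 0$ and that $z_{j+1} = f(z_j, u)$ for the (unique) input, so condition \eqref{eq:tbar_cond_3_inf_vis_recur} applies at each step of the block, giving $\Tt(x_0, z_{j+1}) \geq \Tt(x_0, z_j) + \xi$; (5) telescoping over the block yields $\Tt(x_0, z_k) \geq \Tt(x_0, z_0) + k\xi \geq k\xi$ (using $\Tt(x_0,z_0)\geq 0$ from the induction), and since $k$ is unbounded this contradicts the boundedness of $\Tt$.

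The main obstacle I anticipate is in step (1)/(3): constructing an honest dynamical system $\Sys_{dis} = (\Xx, \Xx_0, f)$ on a compact state space whose single orbit has arbitrarily long excursions outside $\Xx_{INF}$ between returns. The naive "append a counter $\N$" idea breaks compactness, so the counter must be folded into a bounded coordinate — for instance using a coordinate in $]0,1]$ that halves at each step during an excursion and resets at each visit, with the reset value itself shrinking so the next excursion lasts one step longer. I would need to check that $f$ is genuinely a function $\Xx \to \Xx$ (well-defined on the relevant invariant set, and extendable arbitrarily elsewhere), and — if I want the example to also be meaningful for the SOS/continuity discussion elsewhere in the paper — that it can be taken continuous; a piecewise-affine or smooth interpolation on a Cantor-like invariant set should suffice. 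Once the system is pinned down, the contradiction argument in steps (2)--(5) is routine telescoping and should require no subtlety.
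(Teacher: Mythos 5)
Your proposal is correct and takes essentially the same route as the paper: a single trajectory that visits $\Xx_{INF}$ infinitely often but with unboundedly growing gaps between visits, combined with telescoping condition \eqref{eq:tbar_cond_3_inf_vis_recur} (after establishing $\Tt(x_0,\cdot)\geq 0$ along the trajectory via \eqref{eq:tbar_cond_1_inf_vis_recur}--\eqref{eq:tbar_cond_2_inf_vis_recur}) over one sufficiently long excursion outside $\Xx_{INF}$ to contradict the boundedness of $\Tt$. The paper itself does not construct an explicit compact-state-space realization (it argues over the abstract chain of Figure \ref{fig:inf_vis_chain} and a program with the same trace), so the construction you flag as the main obstacle is extra care beyond what the paper's own proof supplies.
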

The proof for Lemma \ref{lem:recur_not_complete} is found in Appendix \ref{ap:lem_nc} but this relies on systems whose state sequence is similar to the one described in Figure \ref{fig:inf_vis_chain}.
The key issue is that we cannot assume the function $\Tt$ to be bounded both from above and below.
To provide a (relatively)-complete proof rule, one can only assume the function to be bounded in one direction as in \cite{chatterjee2024sound}.
An easy remedy to the above problem is to change condition \eqref{eq:tbar_cond_3_inf_vis_recur} by considering a function $\Vv$ to denote a ranking function and ensure it is only bounded from below. 
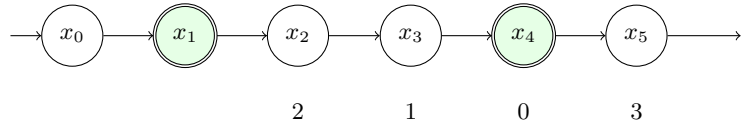
\begin{figure}
    \centering
    \begin{tikzpicture}[node distance =1cm]
    \node[initial, state, draw, initial text =] (0) at (0,0) {$x_0$};
    \node[,state, fill=green!10!white, accepting] (1) at (1.5,0) {$x_1$};
    \node[,state] (2) at (3,0) {$x_2$};
        \node[] (5) at (3,-1) {$2$};
    \node[state] (3) at (4.5,0) {$x_3$};
        \node[] (6) at (4.5,-1) {$1$};
        \node[state, fill=green!10!white, accepting] (4) at (6,0) {$x_4$};
            \node[] (7) at (6,-1) {$0$};
            \node[state,] (8) at (7.5,0) {$x_5$};
                    \node[] (10) at (7.5,-1) {$3$};
        \node[] (9) at (9,0) {};
    \path[->]
    (0) edge node[above]{} (1)
    (1) edge[] node[above]{} (2)
    (2) edge node[]{} (3)
   (3) edge node[]{} (4) 
    (4) edge node[]{} (8)
    (8) edge node[]{} (9);
    \end{tikzpicture}
    \caption{An infinite chain that visits $\Xx_{INF}$ infinitely often. However in such a chain, the distance between successive visits to this set increases.
    We annotate the least possible value of $\Tt(x_0, x_i)$ below each state $x_i$ in this example, when $\xi = 1$. Observe that the value of $\Tt(x_0, x_i)$ needs to increase (unboundedly) as the sequence grows.
    }
    \label{fig:inf_vis_chain}
\end{figure}

One benefit of $\text{C}^3$s is that one may still find a bounded function for the above example, even when the distance between successive visits to the set $\Xx_{INF}$ increases.
To do so, we need to modify the conditions in Definition \ref{def:ccc_inf_vis_rec} and rely on the following insight.
The key issue with the above example is that the value of $\xi$ needs to be fixed.
While we cannot arbitrarily change $\xi$ at every step, we note that one can select the value $\xi$ as a parameter for each subsequence between states in $\Xx_{INF}$.
Allowing this value to be a function of the possible state in $\Xx_{INF}$ that may be visited in the future ensures that one may still find a bounded function to act as a proof of well-foundedness.
We rely on this to consider a notion of C$^3$s which allows for the value of $\xi$ to be dependent on a state as defined below.

\begin{definition}
     \label{def:ccc_inf_vis_sing}
     Consider a control system $\Sys = (\Xx, \Xx_0, U, f)$ and a set of states $\Xx_{INF}$ that must be visited infinitely often.
    Then, function $\Tt: \Xx \times \Xx \to \R$, and bounded (from below) function $\Vv: \Xx \times \Xx \times \Xx \to \R_{\geq 0}$ are a control closure certificate, if, for all $x \in \Xx$ there exists $u \in U$ such that:
     \begin{align}
        &\big( \Tt(x, f(x,u)) \geq 0 \big), \label{eq:tbar_cond_1_inf_vis_sing}
     \end{align}
     and for all $x, y \in \Xx$, for all $u \in U$, we have:
    \begin{align}
         & \big(  \Tt(x, f(x,u) \big) \geq 0 \big) \implies \Big( \big( \Tt(f(x,u), y) \geq 0 \big) \implies \big( \Tt(x, y) \geq 0 \big) \Big), \label{eq:tbar_cond_2_inf_vis_sing}
     \end{align}
     and for all $x_0 \in \Xx_0$, and all $y \in \Xx \setminus \Xx_{INF}$, there exists $w \in \Xx_{INF}$ and $\xi \in \R_{ > 0}$, such that for all $z,z' \in \Xx \setminus \Xx_{INF}$ we have:
     \begin{align}
             &\big( \Tt(x_0, y) \geq 0 \big) {\wedge} \big( \Tt(y,z) \geq 0 \big)   {\wedge} \big( \Tt(z, z') \geq 0 \big) {\implies}  \nonumber \\ &  \qquad \qquad \big( \Vv(y,z',w) \leq \Vv(y,z,w) {-} \xi \big).   \label{eq:tbar_cond_3_inf_vis_sing}
    \end{align}
 \end{definition}

Note, however, that such a strategy introduces an additional 
quantifier alternation in condition \eqref{eq:tbar_cond_3_inf_vis_sing} between $y$ and $w$, and $w$ and $z$,  respectively and that the positive real $\xi$ depends on the universally quantified value $y$.
Intuitively, one seeks to select the value of $\xi$ such that state $w$ that is reachable from both states $z$, and $z'$ with a ranking function argument that decreases with $\xi$.
Second, we need to consider every state $z$, and $z'$ even if there might be accepting states that are visited in between.
Returning to the example from Figure \ref{fig:inf_vis_chain}, one needs to provide a ranking function argument between states $x_5$ and $x_3$, even though there exists an accepting state between them.
Typically, existing approaches such as in \cite{abate2024stochastic,chatterjee2020polynomial} avoid this issue by allowing for the ranking function to increase over the accepting states.
The issue one faces is that in examples such as in Figure \ref{fig:inf_vis_chain} where one jumps to, and out of, the set $\Xx_{INF}$ without significant dwell time, one require values of function $\Vv$ to change significantly.
Such an increase condition is not immediately obvious for closure certificates without considering one step successors as in Definition \ref{def:ccc_inf_vis_rec}.
Due to these challenges, we consider another set of conditions for control closure certificates.
Given a state sequence of a system that visits $\Xx_{INF}$ infinitely often, consider a partitioning of the state sequence into subsequences in between successive visits.
For example, consider the infinite chain in Figure \ref{fig:inf_vis_chain}, and consider the subsequences $\langle x_0, x_1\rangle$, $\langle x_2, x_3, x_4 \rangle$, and so on.
Observe that if one were able to find ranking functions for each of these subsequences, then one is able to provide a proof that the system visits the set $\Xx_{INF}$ infinitely often.
Formally, consider the control system defined as $\hat{\Sys} = \{\hat{\Xx}, \hat{\Xx}_0, U, \hat{f} \}$, where $\hat{\Xx} = \Xx \times \R$ denotes the set of states,  $\hat{\Xx}_0 = \Xx_0 \times \{ 0\} $ denotes the set of initial states, and the transition function $\hat{f}((x,j),u) = \{ (f(x,u),k)  \}$, where $k = j+1$ if $x \in \Xx_{INF}$ and $k = j$ otherwise \footnote{This construction is similar to the counter construction to degeneralize a generalized B\"uchi automaton \cite{thomas2002automata}, except here we consider the counter to be unbounded}.
An illustration of this construction on the infinite sequence in Figure \ref{fig:inf_vis_chain} can be found in Appendix \ref{ap:inf_vis_counter}.
Intuitively, this modified system consists of the states of original system appended with a counter.
The value of this counter increments every time one visits the set $\Xx_{INF}$.
Then one may find different ranking function $\Vv_j$ that show states in $\Xx \setminus \Xx_{INF}$ are visited only finitely often for every counter value $j$.
In the next definition, we consider how one may find certificates over this modified system to ensure that the set $\Xx_{INF}$ is visited infinitely often.

\begin{definition}
    \label{def:ccc_inf_vis_counter}
    Consider a control system $\Sys = (\Xx, \Xx_0, U, f)$ and a set of states $\Xx_{INF}$ that must be visited infinitely often.
    Then, function $\Tt: (\Xx \times \R) \times (\Xx \times \R) \to \R$, and a function $\Vv: \Xx \times \Xx \times \R \to \R_{\geq 0}$ that is bounded from below in $\Xx \times \Xx$ for every $\R$, are a control closure certificate, if for all $x \in \Xx$, there exists $u \in U$ such that for all $j \in \R$:
    \begin{align}
        &\big( \Tt( (x,j), (f(x,u), k) ) \geq 0 \big), \label{eq:tbar_cond_1_inf_vis_counter}
    \end{align}
    where $k = j+1$ if $x \in \Xx_{INF}$, and $k = j$ otherwise.
    And for all $x, y \in \Xx$, $j \in \R$, $\ell \geq (j+1)$, and for all $u \in U$, we have:
    \begin{align}
        & \big(  \Tt( (x,j), (f(x,u) ,k) \big) \geq 0 \big) \implies \nonumber \\
        & \Big( \big( \Tt( (f(x,u), k), (y, \ell)) \geq 0 \big) \implies \big( \Tt((x,j), (y, \ell)) \geq 0 \big) \Big), \label{eq:tbar_cond_2_inf_vis_counter}
    \end{align}
    where $k = j+1$ if $x \in \Xx_{INF}$, and $k = j$ otherwise.
    And for all $j \in \R$, there exists $\xi_j \in \R_{ > 0}$ such that for all  $x_0 \in \Xx_0$, and for all $z,z' \in \Xx \setminus \Xx_{INF}$, we have:
    \begin{align}
            &\big( \Tt((x_0,0),(z,j)) \geq 0 \big) \wedge \big( \Tt((z,j),(z',j)) \geq 0 \big) \implies  \nonumber \\
         &\big( \Vv(x_0,z', j) \leq \Vv(x_0,z, j) - \xi_j \big).   \label{eq:tbar_cond_3_inf_vis_counter} 
    \end{align}
\end{definition}
\begin{theorem}
    \label{thm:ccc_inf}
    Consider a control system $\Sys = (\Xx, \Xx_0, U, f)$ and a set $\Xx_{INF}$ that must be visited infinitely often.
    The existence of functions $\Tt$ and $\Vv$ as in Definition \ref{def:ccc_inf_vis_counter} guarantees that there exists a controller $\kappa$ to ensure that $\Xx_{INF}$ is visited  infinitely often.
\end{theorem}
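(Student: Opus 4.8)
The plan is to run, on the augmented system $\hat{\Sys}=(\hat{\Xx},\hat{\Xx}_0,U,\hat{f})$ introduced just before Definition~\ref{def:ccc_inf_vis_counter}, essentially the same contradiction argument as in the proof of Theorem~\ref{thm:ccc_fin_vis}. First I would fix the controller: for each $x\in\Xx$ let $\kappa(x)$ contain an input $u$ witnessing condition~\eqref{eq:tbar_cond_1_inf_vis_counter}; since that $u$ is required to work for \emph{every} counter value $j$, the resulting $\kappa$ is an ordinary state-feedback controller on $\Sys$, and the counter is only bookkeeping used in the analysis. Now suppose toward a contradiction that some $x_0\in\Xx_0$ has a $\kappa$-trajectory $\langle x_0,x_1,\ldots\rangle$ that visits $\Xx_{INF}$ only finitely often. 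Lift it to $\hat{\Sys}$ as $\langle (x_0,0),(x_1,j_1),\ldots\rangle$, where $j_{i+1}=j_i+1$ iff $x_i\in\Xx_{INF}$. By the finite-visit assumption the counter stabilizes: there are $M\in\N$ and $j^{*}\in\N$ with $x_i\notin\Xx_{INF}$ and $j_i=j^{*}$ for all $i\geq M$. Note no appeal to Ramsey's theorem is needed here, since the tail automatically lives entirely at counter level $j^{*}$.

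Second, I would establish the ``prefix'' relations for $\Tt$ along this run. From~\eqref{eq:tbar_cond_1_inf_vis_counter} applied with $u=\kappa(x_i)$ we get $\Tt\big((x_i,j_i),(x_{i+1},j_{i+1})\big)\geq 0$ for all $i$, and in particular $\Tt\big((x_i,j^{*}),(x_{i+1},j^{*})\big)\geq 0$ for $i\geq M$. Chaining these backwards by repeated use of the inductiveness condition~\eqref{eq:tbar_cond_2_inf_vis_counter} (exactly as the prefix-closure step in the proof of Theorem~\ref{thm:ccc_fin_vis}), the goal is $\Tt\big((x_0,0),(x_i,j^{*})\big)\geq 0$ for every tail index $i\geq M$. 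The delicate point is that~\eqref{eq:tbar_cond_2_inf_vis_counter} only prepends a state whose counter is strictly below the target counter, so the backward chain toward a tail state $x_i$ (counter $j^{*}$) may only prepend states of counter $<j^{*}$, i.e.\ those occurring before the final visit to $\Xx_{INF}$; one must check that routing the chain through that last counter-$(j^{*}-1)$ state and then appending the constant-$j^{*}$ tail indeed lets it reach back to $(x_0,0)$ for \emph{every} tail $i$.

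Third, with the prefix relations in hand, I would apply condition~\eqref{eq:tbar_cond_3_inf_vis_counter} with $j=j^{*}$ and $\xi_{j^{*}}>0$ to the consecutive tail pairs $(z,z')=(x_i,x_{i+1})$, $i\geq M$: from $\Tt((x_0,0),(x_i,j^{*}))\geq 0$ and $\Tt((x_i,j^{*}),(x_{i+1},j^{*}))\geq 0$ we obtain $\Vv(x_0,x_{i+1},j^{*})\leq \Vv(x_0,x_i,j^{*})-\xi_{j^{*}}$. Telescoping, $\Vv(x_0,x_{M+t},j^{*})\leq \Vv(x_0,x_M,j^{*})-t\,\xi_{j^{*}}$ for all $t\in\N$, so $\Vv(x_0,\cdot,j^{*})$ eventually falls below its lower bound (it is bounded from below in $\Xx\times\Xx$ for the fixed value $j^{*}$), a contradiction. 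Hence no such $x_0$ exists and $\kappa$ forces $\Xx_{INF}$ to be visited infinitely often.

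I expect the main obstacle to be precisely the second step: making the transition-invariant conditions~\eqref{eq:tbar_cond_1_inf_vis_counter}--\eqref{eq:tbar_cond_2_inf_vis_counter} yield $\Tt\big((x_0,0),(x_i,j^{*})\big)\geq 0$ for \emph{all} tail states rather than just the first one or two, given that~\eqref{eq:tbar_cond_2_inf_vis_counter} is restricted to target counters $\ell\geq j+1$ while the entire tail shares the counter $j^{*}$. I would want to verify carefully that the consecutive links from~\eqref{eq:tbar_cond_1_inf_vis_counter} can be assembled back to the origin in this regime; if the chaining through the final counter-$(j^{*}-1)$ state does not suffice, the argument would require the bound on $\ell$ in Definition~\ref{def:ccc_inf_vis_counter} to be relaxed (e.g.\ to $\ell \geq k$) for this step to close. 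Everything else --- the choice of $\kappa$, the counter/stabilization bookkeeping, and the telescoping of $\Vv$ --- is routine.
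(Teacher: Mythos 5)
Your proposal retraces the paper's own proof of Theorem~\ref{thm:ccc_inf} almost move for move: the controller is read off condition~\eqref{eq:tbar_cond_1_inf_vis_counter} (and, as you note, is memoryless in the counter because $u$ is quantified before $j$), the offending trajectory is lifted to the counter-augmented system where the counter stabilizes at some $j^{*}$ (the paper likewise needs no Ramsey argument here, in contrast to the finite-visit proofs), the prefix facts $\Tt((x_0,0),(x_i,j^{*}))\geq 0$ are obtained by chaining \eqref{eq:tbar_cond_1_inf_vis_counter}--\eqref{eq:tbar_cond_2_inf_vis_counter}, and the contradiction comes from telescoping \eqref{eq:tbar_cond_3_inf_vis_counter} against the lower bound of $\Vv(\cdot,\cdot,j^{*})$. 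So this is not a different route.

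The step you flag, however, is a genuine gap, and the workaround you float (routing through the last counter-$(j^{*}-1)$ state) does not close it. Under the literal side condition $\ell\geq j+1$ in \eqref{eq:tbar_cond_2_inf_vis_counter}, one may never prepend a state whose counter already equals the target counter $j^{*}$; since every derivation of $\Tt((x_0,0),(x_i,j^{*}))\geq 0$ must prepend $x_{i-2},x_{i-3},\ldots,x_0$ one step at a time onto the primitive fact $\Tt((x_{i-1},\cdot),(x_i,j^{*}))\geq 0$, the chain reaches $(x_0,0)$ only when all prepended states lie at counter at most $j^{*}-1$, i.e., only for the first two states after the last visit to $\Xx_{INF}$. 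Detouring via the last counter-$(j^{*}-1)$ state does not help, because one still needs the tail-to-tail composition $\Tt((x_{L+1},j^{*}),(x_i,j^{*}))\geq 0$, which is blocked for the same reason. Consequently \eqref{eq:tbar_cond_3_inf_vis_counter} can be invoked only finitely many times and the telescoping never violates the lower bound of $\Vv$, so the contradiction is not reached as written. The paper's appendix proof performs exactly the induction you distrust (``via induction, we have $\Tt((x_0,0),(x_i,\ell_i))\geq 0$ and $\Tt((x_i,\ell_i),(x_j,\ell_j))\geq 0$'') without addressing the $\ell\geq j+1$ restriction; the repair you anticipate---relaxing the quantification to $\ell\geq k$, equivalently allowing $\ell=j$ when $x\notin\Xx_{INF}$---is precisely what makes the backward induction reach $(x_0,0)$ for every tail index, and with that reading of Definition~\ref{def:ccc_inf_vis_counter} your argument coincides with the paper's. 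So keep your outline, but replace the tentative ``check the routing'' by the relaxed side condition: as stated, the routing fails and the relaxation (or an equivalent strengthening of the certificate conditions) is needed.
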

The proof of Theorem \ref{thm:ccc_inf} can be found in Appendix \ref{ap:Thm_cc_inf_proof}.
Observe that while the above definition still has a quantifier alternation, the existential quantifier is no longer over the states of the system or the set of control inputs, but rather over the selection of the real value $\xi$ for each choice of counter value $j$.
Intuitively, the goal of the function $\Vv$ is to prove that only finitely many states are visited with a counter value $j$ for any real value $j \in \R$.
Observe that each of these proofs for the counter value $j$ corresponds to a proof where the goal is to show a set of states is visited only finitely often.
Thus, one may adapt the approach considered in Definition \ref{def:ccc_fin_vis} to provide disjunctive well-founded proofs for each counter value.
Here, we may replace condition \eqref{eq:tbar_cond_3_inf_vis_counter} by combining it with condition \eqref{eq:tbar_cond_3_fin_vis}.
We describe these conditions in Appendix \ref{ap:ccc_disjunct_inf}. 
To avoid considering all real values for the counter $j$, one may adopt a strategy similar to bounded model checking where we first fix a bound (say $j_{\max}$) on the value $j$.
Then we consider piecewise functions for conditions \eqref{eq:tbar_cond_1_inf_vis_counter} to \eqref{eq:tbar_cond_3_inf_vis_counter} for all $0 \leq j \leq j_{\max}$.
The benefit of this approach lies in the fact that the control input $u$ can now depend on the value of the counter $j$.
If we changed the alternation in condition \eqref{eq:tbar_cond_1_inf_vis_counter}, then we would need an infinite memory policy as the value $u$ could depend on the value of the unbounded counter $j$.
In particular, if we are able to show that we have that $\Tt((x_0, 0), (z, j_{\max} +1 )) \geq 0 \implies \Tt((x_0, 0), (z, j_{\max} ) ) \geq 0$  for any $x_0 \in \Xx_0$ and $z \in \Xx$, we note that one may use the same function $\Vv_{j_{\max}}$ and the same control inputs $u$ as that of counter $j_{\max}$.
In such a case, the set of states $z$ that satisfy the above conditions represent an overapproximation of the states that can be reached infinitely often.
In particular, this provides guarantees for systems which see both accepting and non-accepting states infinitely often. 
We described the conditions for such a certificate below.

\begin{definition}
\label{def:ccc_inf_vis_counter_fin_piece}
    Consider a control system $\Sys = (\Xx, \Xx_0, U, f)$, a set of states $\Xx_{INF}$ that must be visited infinitely often and $j_{\max} \in \N$ denote a bound on the counter value.
    Then, functions $\Tt_{j,\ell}: \Xx \times \Xx \to \R$, and functions $\Vv_j: \Xx \times \Xx \to \R_{\geq 0}$ for all $0 \leq j \leq \ell \leq j_{\max}$, that are bounded from below in
    $\Xx \times \Xx$, constitute a control closure certificate, if for all $0 \leq j \leq j_{\max}$, and for all $x \in \Xx$, there exists $u \in U$ such that:
    \begin{align}
        &\big( \Tt_{j, k}( x, f(x,u) ) \geq 0 \big), \label{eq:tbar_cond_1_inf_vis_piece}
    \end{align}
    where $k = j+1$ if $x \in \Xx_{INF}$, and $k = j$ otherwise. 
    And for all $0 \leq j < \ell \leq  (j_{\max}+1)$, and for all $x, y \in \Xx$, and for all $u \in U$, we have:
    \begin{align}
        & \big(  \Tt_{j,k}( x, f(x,u)) \big) \geq 0 \big) {\implies}  \Big( \big( \Tt_{k, \ell}( f(x,u), y) \geq 0 \big) {\implies} \big( \Tt_{j, \ell}(x, y) \geq 0 \big) \Big), \label{eq:tbar_cond_2_inf_vis_piece}
    \end{align}
    where $k = (j+1)$ if $x \in \Xx_{INF}$, and $k = j$ otherwise.
    And for all $x_0 \in \Xx_0$, and $z, z' \in \Xx$ we have:
        \begin{align}
        & \big(\Tt_{0,(j_{\max} + 1)} (x_0, z)  ) \geq 0 \big) {\implies} \big( \Tt_{0, j_{\max}}(x_0, z ) \geq 0 \big), \text{ and } \label{eq:tbar_implication_contain_1} 
        \end{align}
    And for all $0 \leq j \leq j_{\max}$, there exists $\xi_j \in \R_{ > 0}$ such that for all  $x_0 \in \Xx_0$, for all $z,z' \in \Xx \setminus \Xx_{INF}$, we have:
    \begin{align}
            &\big( \Tt_{0, j}(x_0, z) \geq 0 \big) \wedge \big( \Tt_{j,j}(z,z') \geq 0 \big) \implies  \big( \Vv_j(x_0,z') \leq \Vv_j(x_0,z) - \xi_j \big).   \label{eq:tbar_cond_3_inf_vis_piece} 
    \end{align}
\end{definition}
\begin{theorem}
    \label{thm:ccc_inf_piece}
    Consider a control system $\Sys = (\Xx, \Xx_0, U, f)$, a set $\Xx_{INF}$ that must be visited infinitely often and $j_{\max} \in \N$ denote a bound on the counter value.
    The existence of functions $\Tt_{j, \ell}$ and $\Vv_j$ for all $0 \leq i, j \leq (j_{\max} +1 )$ as in Definition \ref{def:ccc_inf_vis_counter_fin_piece} guarantees that there exists a controller $\kappa$ to ensure that $\Xx_{INF}$ is visited  infinitely often.
\end{theorem}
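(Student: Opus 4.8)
The plan is to follow the template of the proof of Theorem~\ref{thm:ccc_fin_vis} (contradiction, transition-invariant composition, telescoping ranking function), adapted as in Theorem~\ref{thm:ccc_inf} by encoding the bound $j_{\max}$ in a finite memory and using the containment condition~\eqref{eq:tbar_implication_contain_1} to deal with counter overflow. First I would build the controller explicitly as a finite-memory controller $\kappa:\Xx\times\{0,\dots,j_{\max}\}\to 2^U$: in state $x$ with memory $j$, let $\kappa(x,j)$ be the set of inputs $u$ satisfying $\Tt_{j,k}(x,f(x,u))\ge 0$ with $k=j{+}1$ if $x\in\Xx_{INF}$ and $k=j$ otherwise, which is nonempty by~\eqref{eq:tbar_cond_1_inf_vis_piece}; the memory then updates to $\min(k,j_{\max})$. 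Fix $x_0\in\Xx_0$ and let $\langle x_0,x_1,\dots\rangle$ be a resulting trajectory, with associated memory values $\langle m_0{=}0,m_1,\dots\rangle$, so that $m_i=\min(\ell_i,j_{\max})$ where $\ell_i$ counts the indices $t<i$ with $x_t\in\Xx_{INF}$.

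Next, assuming for contradiction that $\Xx_{INF}$ is visited only finitely often along this trajectory, let $L<\infty$ be the total number of visits, pick $N$ with $x_i\notin\Xx_{INF}$ for all $i\ge N$, and observe that the memory is then constantly $j^\star:=\min(L,j_{\max})$ from time $N$ on. The crux is to establish $\Tt_{0,m_i}(x_0,x_i)\ge 0$ for every $i$. When $L\le j_{\max}$ this is the plain transition-invariant induction (as in the proof of Theorem~\ref{thm:ccc_fin_vis}): one composes the one-step relations furnished by~\eqref{eq:tbar_cond_1_inf_vis_piece} via~\eqref{eq:tbar_cond_2_inf_vis_piece}, and the counter never overflows. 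When $L>j_{\max}$ the counter overflows finitely many times; at each overflow the composition temporarily produces a relation whose second index is $j_{\max}{+}1$, and~\eqref{eq:tbar_implication_contain_1} renormalizes it back to $j_{\max}$, so that one still obtains $\Tt_{0,j_{\max}}(x_0,x_i)=\Tt_{0,m_i}(x_0,x_i)\ge 0$. Separately, for $i\ge N$ we have $x_i\notin\Xx_{INF}$ with memory $j^\star$, so~\eqref{eq:tbar_cond_1_inf_vis_piece} gives $\Tt_{j^\star,j^\star}(x_i,x_{i+1})\ge 0$ directly.

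Finally, for each $i\ge N$ condition~\eqref{eq:tbar_cond_3_inf_vis_piece} with $j=j^\star$, $z=x_i$, $z'=x_{i+1}$ (both in $\Xx\setminus\Xx_{INF}$) yields $\Vv_{j^\star}(x_0,x_{i+1})\le \Vv_{j^\star}(x_0,x_i)-\xi_{j^\star}$; iterating from $N$ gives $\Vv_{j^\star}(x_0,x_{N+t})\le \Vv_{j^\star}(x_0,x_N)-t\,\xi_{j^\star}\to-\infty$ as $t\to\infty$, contradicting that $\Vv_{j^\star}$ is bounded from below on $\Xx\times\Xx$. Hence $\Xx_{INF}$ is visited infinitely often under $\kappa$.

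I expect the main obstacle to be the overflow bookkeeping in the case $L>j_{\max}$: carefully showing that one may keep re-applying~\eqref{eq:tbar_implication_contain_1} to maintain the relation to the original initial state $x_0$ through arbitrarily (but finitely) many further visits to $\Xx_{INF}$, while staying within the index ranges of~\eqref{eq:tbar_cond_2_inf_vis_piece}. By contrast, well-definedness of $\kappa$ and the telescoping of $\Vv_{j^\star}$ are routine.
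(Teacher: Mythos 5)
Your proposal follows essentially the same route as the paper's proof: the same saturating-counter finite-memory controller extracted from condition~\eqref{eq:tbar_cond_1_inf_vis_piece}, a contradiction argument in which the visit count stabilizes, backward composition via~\eqref{eq:tbar_cond_2_inf_vis_piece} with the containment condition~\eqref{eq:tbar_implication_contain_1} used to renormalize the relation anchored at $x_0$ once the count exceeds $j_{\max}$, and the telescoping of $\Vv_{j^\star}$ through~\eqref{eq:tbar_cond_3_inf_vis_piece} against its lower bound. The overflow bookkeeping you single out as the main obstacle is precisely the step the paper itself treats only informally (after one application of~\eqref{eq:tbar_implication_contain_1} it simply asserts an inductive argument yielding $\Tt_{0,j_{\max}}(x_0,x_{\upsilon+r})\geq 0$ for all $r\geq 1$), so your argument is at the same level of detail as the paper's.
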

\begin{proof}
Consider the finite memory control strategy $\kappa: \Xx \times \{ 0, \ldots, j_{\max} \}$ such that we use controller $\kappa(x_i, j) $ for any $j \leq j_{\max}$, if $j$ states in the set $\Xx_{INF}$ are visited, and $\kappa(x_i, j_{max})$ otherwise.
Let $\kappa(x_i, j) = \{ u \mid \Tt_{j, k} (x_i, f(x_i, u)) \geq 0 \}$, where $k = j+1$ if $x_i \in \Xx_{VF}$ and $k = j$ otherwise, when $j \leq j_{\max}$.
Assume that the system visits the set $\Xx_{INF}$ only finitely often under this control strategy.
That is, let the corresponding sequence be $ \bm{x} = \langle x_0, x_1, \ldots  \ldots  \rangle$.
Let this correspond to the sequence  $\bm{\hat{x}} = \langle (x_0, \ell_0), (x_1, \ell_1), \ldots  \ldots  \rangle$ in the system $\hat{\Sys}$,
where $x_{i+1} = \kappa(x_i, \ell_i ) $, and $\ell_{i+1} = \ell_i + 1$ if $x_i \in \Xx_{INF}$ and $\ell_{i+1} = \ell_i$ otherwise.
Observe that $\ell_0 = 0$.
Thus there exists some $k \in \N$ such that for all $i \geq k$, we have $\ell_i = \ell_k$. 
Following the results of Theorems \ref{thm:ccc_fin_vis} and \ref{thm:ccc_inf}, we cannot have $k \leq j_{\max}$.
Let $k \geq (j_{\max}+1)$, and consider state $x_\upsilon$
such that $\ell_\upsilon = j_{\max}$, and $\ell_{\upsilon+1} = (j_{\max} + 1)$.
Following condition \eqref{eq:tbar_cond_1_inf_vis_piece}, we must have $\Tt_{\ell_{(\upsilon - 1)}, \ell_{\upsilon}  } (x_{\upsilon-1}, x_\upsilon) \geq 0 $ and  $\Tt_{ \ell_{\upsilon},\ell_{ (j_{\max} + 1)}   } (x_{\upsilon}, x_{\upsilon+1}) \geq 0 $.
Thus via condition \eqref{eq:tbar_cond_2_inf_vis_piece}, we must have $\Tt_{\ell_{(\upsilon - 1)}, (j_{\max} + 1) } (x_{\upsilon-1}, x_{\upsilon+1} ) \geq 0 $.
Inducting on conditions \eqref{eq:tbar_cond_1_inf_vis_piece} and \eqref{eq:tbar_cond_2_inf_vis_piece}, we get $\Tt_{0, (j_{\max} + 1) } (x_{0}, x_{\upsilon+1} ) \geq 0$ and so following condition \eqref{eq:tbar_implication_contain_1}, we must have $\Tt_{0, j_{\max}} (x_0, x_{\upsilon + 1}) \geq 0$.
Observe that one may select a control input $u$ as in condition \eqref{eq:tbar_cond_1_inf_vis_piece} to ensure $\Tt_{j_{\max},j_{\max+1} } (x_{\upsilon +1 }, x_{\upsilon + 2})$, and thus we have an inductive argument that for any $r \in \N_{\geq 1}$, we have   $\Tt_{0, j_{\max}} (x_0, x_{\upsilon + r}) \geq 0$.
Thus, we observe that the antecedent of condition \eqref{eq:tbar_cond_3_inf_vis_piece} holds and so the ranking function must decrease.
In a similar manner as earlier proofs we can conclude that this creates a contradiction.
\end{proof}

Now, we describe how one may use the above conditions to design controllers against $\omega$-regular specifications.
\subsection{C$^3$s for $\omega$-regular Specifications}
\label{subsec:ccc_parity}
To synthesize controllers against $\omega$-regular specifications via C$^3$s, let the DPA $\Aa= (\Sigma, Q, q_0, \delta, {Acc})$ denote the desired specification and the set $ \{1, \ldots ,c \}$ denote the set of priorities or colors, \textit{i.e.}, $Acc: Q \to \{1, \ldots c \}$.
Then the system $\Sys$ under labeling map $\Ll$ and controller $\kappa$ satisfies the $\omega$-regular specification if $\Tr \subseteq  L(\Aa)$, \textit{i.e.}, every trace of the system under the labeling map and controller $\kappa$ is accepted by $\Aa$.
To synthesize controller $\kappa$, we  first construct the product $\Sys \otimes \Aa = (\Xx', \Xx_0', U, f')$ of the system $\Sys = (\Xx, \Xx_0, U, f)$ with the DPA $\Aa$ representing the  the specification, where 
$\Xx' = \Xx \times Q$ indicates the state set, and
$\Xx'_0 = \Xx_0' \times q_0$ are the initial set of states.
We define the state transition relation $f': \Xx \times Q \times U \to \Xx \times Q$ as :
 \[ f'((x,q_i), u) = \big\{ (f(x,u), q_j) \mid q_j \in \delta(q_i, \Ll(x)) \big\}. \]

We note that ensuring the specification is satisfied is equivalent to showing that the minimum priority seen infinitely often is even.
That is, we need to show that for every sequence $\bm{s}' = \langle x'_0, x'_1 , \ldots \rangle$, under controller $\kappa$, if $x'_j = (x_j, q_i)$ where $Acc(q_i) $ is odd, either: 
\begin{itemize}
    \item [1.] There exists some $\ell_1 \in \N$ such that for all $\ell_2 \geq \ell_1$, we have $\pi_2(x'_{\ell_2}) \neq q_i$, or
    \item [2.] There exists some $q_j$, where $Acc(q_j)$ is even, $Acc(q_j) < Acc(q_i)$, and for all $\ell_1 \in \N$, there exists $\ell_2 \geq \ell_1$, such that we have $\pi_2(x'_{\ell_2}) = q_J$.
\end{itemize}
These correspond to showing that either we see automata states with priority $Acc(q_j)$ only finitely often or we see some automaton state $q_i$ which as an even priority that is less than $Acc(q_j)$ infinitely often.
We observe that the first two conditions for building invariants are the same for all the conditions.
To do so, let us assume sets $\Xx_{VF}$ and $\Xx_{INF}$ are sets that must be visited finitely often and infinitely often.
Then one may search for certificates to prove both of the above as follows:

\begin{definition}
    \label{def:ccc_fin_inf}
    Consider a control system $\Sys = (\Xx, \Xx_0, U, f)$, a set of states $\Xx_{VF} = \underset{1 \leq r \leq p}{\bigcup} \Xx_{VF_r}$ that must be visited finitely often, and a set of states $\Xx_{INF}$ that must be visited infinitely often.
    Then, function $\Tt_{i,j}: \Xx \times \Xx \to \R$, functions $\Zz_r: \Xx \times \Xx \to \R_{\geq 0}$ that are bounded from below, and functions 
    $\Vv_i: \Xx \times \Xx \to \R_{\geq 0}$ that are bounded from below and defined for all $1 \leq r \leq p$ , and all $0 \leq i, j \leq j_{\max}$
    are a control closure certificate, if for all $x \in \Xx$, and all $0 \leq j \leq j_{\max}$, there exists $u \in U$ such that:
    \begin{align}
        &\big( \Tt_{j, k} (x, f(x,u) ) \geq 0 \big), \label{eq:tbar_cond_1_inf_fin}
    \end{align}
    where $k = j+1$ if $x \in \Xx_{INF}$, and $k = j$ otherwise.
    And for all $x, y \in \Xx$, and all $0 \leq j < \ell \leq (j+1)$, and for all  $u \in U$, we have:
    \begin{align}
        & \big(  \Tt_{j,k}( x, f(x,u) ) \big) \geq 0 \big) {\implies} \Big( \big( \Tt_{k, \ell}( (f(x,u), y) \geq 0 \big) {\implies} \big( \Tt_{j, \ell}(x, y) \geq 0 \big) \Big), \label{eq:tbar_cond_2_inf_fin}
    \end{align}
    where $k = j+1$ if $x \in \Xx_{INF}$, and $k = j$ otherwise.
    And for all $x_0 \in \Xx_0$, and $z, z' \in \Xx$ we have:
        \begin{align}
        & \big(\Tt_{0,(j_{\max} + 1)} (x_0, z)  ) \geq 0 \big) {\implies} \big( \Tt_{0, j_{\max}}(x_0, z ) \geq 0 \big). \label{eq:tbar_inv_omega} 
\end{align}
    And for all $x_0 \in \Xx_0$, and all $1 \leq r \leq p$, and all $0 \leq \ell_1 \leq \ell_2 \leq (j_{\max}+1)$, there exists $\xi_r \in \R_{ > 0}$ such that for all $z,z' \in \Xx_{VF_{r}}$,:
    \begin{align}
            &\big( \Tt_{0, \ell_1} (x_0, z) \geq 0 \big) \wedge \big( \Tt_{\ell_1, \ell_2} (z,z') \geq 0 \big) \implies \big( \Zz_r(x_0, z') \leq \Zz_r(x_0,z) - \xi_r \big).   \label{eq:tbar_cond_3_fin_visit_omega}
    \end{align}
    And for all $0 \leq j \leq j_{\max}$, there exists $\xi_j \in \R_{ > 0}$ such that for all  $x_0 \in \Xx_0$, for all $z,z' \in \Xx \setminus \Xx_{INF}$, we have:
    \begin{align}
            &\big( \Tt_{0,j} ( x_0, z) \geq 0 \big) \wedge \big( \Tt_{j,j} (z,z') ) \geq 0 \big) \implies  \big( \Vv_j(x_0,z') \leq \Vv_j(x_0,z) - \xi_j \big).   \label{eq:tbar_cond_3_inf_vis_omega} 
    \end{align}
\end{definition}
\begin{theorem}
    \label{thm:ccc_inf_fin}
    Consider a control system $\Sys = (\Xx, \Xx_0, U, f)$ and a set $\Xx_{INF}$ that must be visited infinitely often and a set $\Xx_{VF} = \underset{1 \leq i \leq r}{\bigcup} \Xx_{VF_r}$ that must be visited only finitely often.
    Let $j_{\max} \in \N$ denote a threshold for the counter value.
    Then, the existence of functions $\Tt_{j, \ell}$, $\Vv_j$, and $\Zz_r$ for all $1 \leq i \leq r$, and $0 \leq j, \ell \leq (j_{\max}+1)$ as in Definition \ref{def:ccc_fin_inf} guarantees that there exists a controller $\kappa$ to ensure that $\Xx_{INF}$ is visited  infinitely often and $\Xx_{VF}$ only finitely often.
\end{theorem}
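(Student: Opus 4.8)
The plan is to establish the two guarantees separately but under one common finite-memory controller. Take $\kappa: \Xx \times \{0,\ldots,j_{\max}\} \to 2^U$ to be exactly the controller constructed in the proof of Theorem~\ref{thm:ccc_inf_piece}: for $j \le j_{\max}$ put $\kappa(x,j) = \{\, u \in U \mid \Tt_{j,k}(x, f(x,u)) \ge 0 \,\}$ with $k = j+1$ if $x \in \Xx_{INF}$ and $k = j$ otherwise, and use $\kappa(x, j_{\max})$ once $j_{\max}$ visits to $\Xx_{INF}$ have already occurred; condition~\eqref{eq:tbar_cond_1_inf_fin} makes each $\kappa(x,j)$ nonempty. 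The counter is the memory, tracking the number of visits to $\Xx_{INF}$ capped at $j_{\max}$; this is the lifting $\hSys$ from before Definition~\ref{def:ccc_inf_vis_counter}. All trajectories considered below are under this $\kappa$.

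For the recurrence guarantee ($\Xx_{INF}$ visited infinitely often), I would simply note that conditions~\eqref{eq:tbar_cond_1_inf_fin}, \eqref{eq:tbar_cond_2_inf_fin}, \eqref{eq:tbar_inv_omega} and \eqref{eq:tbar_cond_3_inf_vis_omega}, restricted to the functions $\Tt_{j,\ell}$ and $\Vv_j$, are syntactically the conditions~\eqref{eq:tbar_cond_1_inf_vis_piece}, \eqref{eq:tbar_cond_2_inf_vis_piece}, \eqref{eq:tbar_implication_contain_1} and \eqref{eq:tbar_cond_3_inf_vis_piece} of Definition~\ref{def:ccc_inf_vis_counter_fin_piece}. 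Hence Theorem~\ref{thm:ccc_inf_piece} applies directly and gives that under $\kappa$ every trajectory from $\Xx_0$ visits $\Xx_{INF}$ infinitely often.

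For the persistence guarantee ($\Xx_{VF}$ visited only finitely often), I would argue by contradiction, lifting the proof of Theorem~\ref{thm:ccc_fin_vis} to the counter system. Assume a trajectory $\langle x_0, x_1, \ldots \rangle$ from $x_0 \in \Xx_0$ meets $\Xx_{VF}$ infinitely often; by the pigeonhole principle (Ramsey's theorem, as in the proof of Theorem~\ref{thm:ccc_fin_vis}) some $\Xx_{VF_{r}}$ is met infinitely often, yielding indices $i_0 < i_1 < \cdots$ with $z_m := x_{i_m} \in \Xx_{VF_{r}}$. Let $\ell_t$ denote the capped counter value at step $t$. Using conditions~\eqref{eq:tbar_cond_1_inf_fin} and \eqref{eq:tbar_cond_2_inf_fin} to compose transitions along the trajectory, together with the collapse condition~\eqref{eq:tbar_inv_omega} applied (from the anchor $x_0$) whenever the counter index would exceed $j_{\max}$ — exactly the bookkeeping of Theorem~\ref{thm:ccc_inf_piece}'s proof — one obtains indices $0 \le a_m \le b_m \le j_{\max}+1$ with $\Tt_{0, a_m}(x_0, z_m) \ge 0$ and $\Tt_{a_m, b_m}(z_m, z_{m+1}) \ge 0$ for every $m$ (here one uses $\Xx_{VF_{r}} \cap \Xx_{INF} = \emptyset$, which holds in the parity setting of Section~\ref{subsec:ccc_parity}, so the transition leaving $z_m$ does not increment the counter). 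Condition~\eqref{eq:tbar_cond_3_fin_visit_omega} then forces $\Zz_r(x_0, z_{m+1}) \le \Zz_r(x_0, z_m) - \xi_r$ for all $m$, so $\Zz_r(x_0, z_m) \le \Zz_r(x_0, z_0) - m\,\xi_r$, which drops below the lower bound of $\Zz_r$ for large $m$ — a contradiction. Together with the recurrence guarantee this proves the theorem.

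The main obstacle is the penultimate step above: controlling the counter indices when the counter has saturated at $j_{\max}$. There, a visit to $\Xx_{INF}$ produces an accumulated relation $\Tt_{j_{\max}, j_{\max}+1}$ while the next controller-chosen transition produces $\Tt_{j_{\max}, \cdot}$, so the two cannot be chained by~\eqref{eq:tbar_cond_2_inf_fin} as they stand; the fix is to re-anchor the composition at $x_0$ and reduce via~\eqref{eq:tbar_inv_omega} each time, mirroring precisely the argument in the proof of Theorem~\ref{thm:ccc_inf_piece} (this is why condition~\eqref{eq:tbar_inv_omega} is imposed). Everything else — the pigeonhole step, the inductive composition in the sub-saturation regime, and the ranking-function contradiction — is a routine combination of the proofs of Theorems~\ref{thm:ccc_fin_vis}, \ref{thm:ccc_inf} and~\ref{thm:ccc_inf_piece}.
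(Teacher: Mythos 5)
Your overall decomposition is exactly the paper's: the paper proves this theorem in a single sentence by appealing to the proofs of Theorems~\ref{thm:ccc_fin_vis} and~\ref{thm:ccc_inf_piece}, under the finite-memory controller built in the proof of Theorem~\ref{thm:ccc_inf_piece}, and your recurrence half is sound as stated, since conditions~\eqref{eq:tbar_cond_1_inf_fin}, \eqref{eq:tbar_cond_2_inf_fin}, \eqref{eq:tbar_inv_omega} and~\eqref{eq:tbar_cond_3_inf_vis_omega} are indeed the conditions of Definition~\ref{def:ccc_inf_vis_counter_fin_piece}, so Theorem~\ref{thm:ccc_inf_piece} applies verbatim.

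The persistence half, however, has a genuine gap at exactly the point you flag as the main obstacle, and the fix you propose does not close it. To apply~\eqref{eq:tbar_cond_3_fin_visit_omega} to consecutive visits $z_m, z_{m+1} \in \Xx_{VF_r}$ you need \emph{both} $\Tt_{0,\ell_1}(x_0,z_m) \geq 0$ and $\Tt_{\ell_1,\ell_2}(z_m,z_{m+1}) \geq 0$. After the counter saturates, condition~\eqref{eq:tbar_cond_1_inf_fin} only provides one-step relations with index pairs $(j_{\max},j_{\max})$ or $(j_{\max},j_{\max}{+}1)$, and condition~\eqref{eq:tbar_cond_2_inf_fin} only lets you \emph{prepend} a one-step relation whose second index matches the first index of an already-accumulated relation. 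Hence, whenever a state of $\Xx_{INF}$ occurs strictly inside the segment from $z_m$ to $z_{m+1}$ (more than one step before $z_{m+1}$) the indices $j_{\max}{+}1$ and $j_{\max}$ fail to match and $\Tt_{\ell_1,\ell_2}(z_m,z_{m+1}) \geq 0$ cannot be derived from the stated conditions; since the recurrence half forces infinitely many $\Xx_{INF}$ visits interleaved with the $\Xx_{VF_r}$ visits, such pairs cannot be excluded, so the unbounded descent of $\Zz_r$ is not secured. Re-anchoring via~\eqref{eq:tbar_inv_omega} does not repair this for two reasons: that condition only collapses relations whose source is an initial state with index $0$, so it says nothing about the required relation between $z_m$ and $z_{m+1}$; and extending an anchored relation $\Tt_{0,j_{\max}}(x_0,\cdot)$ forward by one more step is an \emph{append} composition that~\eqref{eq:tbar_cond_2_inf_fin} does not license. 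To be fair, the paper leaves precisely this bookkeeping implicit (the final inductive step in the proof of Theorem~\ref{thm:ccc_inf_piece} has the same index-mismatch issue), so your difficulty reflects a real looseness in the source; but as written, your argument does not establish the hypotheses of~\eqref{eq:tbar_cond_3_fin_visit_omega} along an infinite subsequence, and the contradiction does not go through without an additional assumption, e.g.\ a collapse condition for non-anchored relations such as $\Tt_{j_{\max},j_{\max}+1}(z,y) \geq 0 \implies \Tt_{j_{\max},j_{\max}}(z,y) \geq 0$, or an explicit append-composition property.
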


The proof of Theorem \ref{thm:ccc_inf_fin} follows from the proof of Theorem \ref{thm:ccc_fin_vis} and Theorem \ref{thm:ccc_inf_piece}.
Note that one may equivalently consider other conditions for finite and infinite visits as discussed earlier.
We now show how one may use certificates as in Definition \ref{def:ccc_fin_inf} to design controllers for $\omega$-regular specifications.
First, we search for a C$^3$ over $\Sys \otimes \Aa$ to ensure that the set $\Xx'_{VF} = \Xx \times \{ q_i \mid Acc(q_i) = 1 \}$ is visited only finitely often.
If we fail to do so, then we do not have an even priority that is smaller than $1$ as the set of colors we consider are $\{1, \ldots c \}$ and thus we need to change the template of certificates.
If we are successful, we then try to find a certificate to ensure that the set $\Xx'_{VF} = \Xx \times \{ q_i \mid Acc(q_i) = 1 \} \cup  \Xx \times \{ q_i \mid Acc(q_i) = 3 \}$ is visited only finitely often.
If we fail to do so we instead seek to find two certificates to show that $\Sys \otimes \Aa$ visits  $\Xx'_{VF} = \Xx \times \{ q_i \mid Acc(q_i) = 1 \}$ only finitely often and visits the set $\Xx'_{INF} = \Xx \times \{ q_i \mid Acc(q_i) = 2 \}$ 
infinitely often.
If we succeed, we continue till we fail for some state with odd parity $q_j$, we then set $\Xx_{INF} = \Xx_{INF} \cup \Xx \times \{q_\ell \}$ where $Acc(q_\ell) < Acc(q_j)$ and  $\ell$ is even.
We continue this process until we either find a certificate satisfying the above conditions for all odd automaton states, or we conclude that we do not have a proof with the desired template.
We should note that we need to search independently over these different combinations of priorities as we consider different sets for $\Xx_{INF}$ and $\Xx_{VF}$ respectively.

\section{Computation of C$^3$s}
\label{sec:comp}
To find control closure certificates, we make use of a semidefinite programming approach~\cite{Parrilo_2003} via sum-of-squares (SOS) similar to the automated search for standard barrier certificates~\cite{prajna_2004_safety}.
A  set $A \subseteq \R^n$ is semi-algebraic if it  can be defined with the help of a vector of polynomial inequalities $h(x)$ as $A = \{ x \mid h(x) \geq 0 \}$, where the inequalities is interpreted component-wise.

To adopt an SOS approach to find C$^3$s, we consider the sets $\Xx$, $\Xx_0$, $\Xx_{VF}$, $\Xx_{INF}$, $\Xx \setminus \Xx_{INF}$ and $U$ to be semi-algebraic sets defined with the help of vectors of polynomial inequalities $g$, $g_0$, $g_{VF}$, $g_{INF}$, $g_{CINF}$, and $g_{u}$, respectively.
When dealing with finite partitions such as in condition \eqref{eq:tbar_cond_1_fin_vis}, we also assume that each partition $\Xx_{VF_i}$ is represented by polynomial inequalities $g_{VF_i}$ respectively.
As these sets are semi-algebraic, we know that sets $\Xx \times \Xx \times U$, $\Xx_0 \times \Xx$, $\Xx_0 \times \Xx_{VF_i} \times \Xx_{VF_i}$ and $\Xx_0 \times (\Xx \setminus \Xx_{INF}) \times (\Xx \setminus \Xx_{INF})$ are semi-algebraic as well, with their corresponding vectors denoted by $g_{A}$, $g_{B}$, $g_{C,i}$ and $g_{D}$ respectively.
Furthermore, we assume that the function $f$ is polynomial.
To deal with constraints with implications, we rewrite them in the form of sufficient conditions.
We strengthen implication-based conditions into ones that are compatible with SOS optimization via S-procedure \cite{yakubovich1971s}.
For example, condition \eqref{eq:tbar_cond_2_fin_vis_alt} can be rewritten as $\Tt(x, y) - \tau \Tt(f(x,u), y) \geq 0$, where $\tau \in \mathbb{R}_{>0}$. Observe that if the above holds, then so does condition \eqref{eq:tbar_cond_2_fin_vis_alt}.
Observe that if one satisfies this inequality, then condition \eqref{eq:tbar_cond_2_fin_vis_alt} holds.
To find C$^3$s, we fix the template to be a linear combination of user-defined basis functions of the form: $\Tt(x,y) = \mathbf{c}_{\Tt}^T\mathbf{b}(x,y) = \sum_{i = 1}^{n} c_i b_i(x,y)$, 
   $\Vv(x) = \mathbf{c}_{\Vv}^T\mathbf{b}(x) = \sum_{m = 1}^{n} \bar{c}_m b_m(x)$, and 
    $\Zz_i(x) = \mathbf{c}_{\Zz, i}^T\mathbf{b}(x) = \sum_{m = 1}^{n} \underline{c}_{m,i} b_m(x)$,
where functions $b_i(x,y)$ are monomials over the state variables $x$ and $y$, and $b_m(x)$ are monomials over the state variable $x$, and $c_1, \ldots, c_n$, $\bar{c}_1, \ldots, \bar{c}_n$, and $\underline{c}_{1,i}, \ldots, \underline{c}_{n,i}$ are real coefficients.
We then restrict the input set $U_d = \{u_1, \ldots, u_m \}$ to consist of a set of finite inputs.
One can then rewrite condition~\eqref{eq:tbar_cond_1_fin_vis} as $\sum_{i = 1}^{m}\mu_i \Tt(x, f(x,u_i)) \geq 0$ where $\mu_i \in \R_{>0}$.
Observe that satisfying this implies condition~\eqref{eq:tbar_cond_1_fin_vis}.
Now we illustrate how one can encode conditions~\eqref{eq:tbar_cond_1_inf_fin}-\eqref{eq:tbar_cond_3_inf_vis_omega} of Definition \ref{def:ccc_fin_inf} into SOS program.
To ensure functions $\Vv_j$ and $\Zz_r$ are bounded from below, we assume them to be SOS over the relevant sets.
One can reduce the search for a C$^3$ to showing that the following polynomials are SOS for all states $x,y \in \Xx$, $x_0 \in \Xx_0$, $w_r,w_r' \in \Xx_{VF_r}$ for any $0 \leq r \leq p$, and all $z,z' \in \Xx\setminus \Xx_{INF}$, and all inputs $u \in U$, and finite inputs $u_t \in U_d$, and constants $j,\ell_1 \in \set{1,\cdots,j_{max}}$, $\ell \geq (j+1)$, $\ell_2 \geq (\ell_1+1)$, where $k = (j+1)$ if $x \in \Xx_{INF}$, and $k = j$ otherwise:
\begin{align}
    &\Tt_{j,k}(x,f(x,u_m))+\sum_{t=1}^{m-1}\big(\mu_t\Tt_{j,k}(x,f(x,u_t))\big)-\lambda_0^T(x)g(x),  
    \label{eq:sos_tbar_1} \\
    &\Tt_{j,\ell}(x,y) - \tau_1\Tt_{k,\ell}(f(x,u),y)- \tau_2\Tt_{j,k}(x,f(x,u)) -\lambda_{A}^T(x,y,u)g_{A}(x,y,u), 
    \label{eq:sos_tbar_2} \\
     &\Tt_{0, j_{\max}}(x_0, y ) - \tau_3 (\Tt_{0,(j_{\max} + 1)} (x_0, y)) - \lambda_B^T(x_0,y)g_B(x_0,y),
    \label{eq:sos_cbar_7} \\
    &\Zz_{r}(x) - \lambda_{1}^T(x)g(x),
    \label{eq:sos_cbar_3}\\
    &\Zz_{r}(w_r) {-} \Zz_{r}(w_r') {-} \tau_4\Tt_{0,\ell_1}(x_0,w_r) {-} \tau_5\Tt_{\ell_1,\ell_2}(w_r,w_r') {-} \xi_i \nonumber\\
    &\qquad {-}\lambda_{C,r}^T(x_0,w_r,w'_r)g_{C,r}(x_0,w_r,w_r'),
    \label{eq:sos_cbar_4}\\
    &\Vv_{j}(x) - \lambda_{2, j}^T(x)g(x), 
    \label{eq:sos_cbar_5}\\
    &\Vv_{j}(z) {-} \Vv_{j}(z') {-} \tau_6\Tt_{0,j}(x_0,z) - \tau_7\Tt_{j,j}(z,z') {-} \xi_D {-}\lambda_{D}^T(x_0,z,z')g_{D}(x_0,z,z'),
    \label{eq:sos_cbar_6}
\end{align}
where $\mu_i, \tau_i, \xi_i \in\R_{ > 0}$ are some positive values, and the multipliers $\lambda_{A}$, $\lambda_{B}$, $\lambda_{C,r}$, $\lambda_{D}$ are arbitrary sum-of-squares polynomials in their respective state variables over the regions $\Xx \times \Xx \times U$, $\Xx_0 \times \Xx$, $\Xx_0 \times \Xx_{VF_r} \times \Xx_{VF_r}$, and $\Xx_0 \times (\Xx \setminus \Xx_{INF}) \times (\Xx \setminus \Xx_{INF})$, respectively. Similarly, $\lambda_0$, $\lambda_{1}$, and $\lambda_{2,j}$ are arbitrary sum-of-squares polynomials over the region $\Xx$. 
For $\omega$-regular specifications, one can take the product of the system with the corresponding DPA and employ the techniques discussed in \cite[Section 4.2]{murali2024closure}.
We omit these details due to lack of space, and for simpler readability, but we use these in our case studies for the next section.

\section{Case Studies}
\label{sec:case_studies}

We experimentally demonstrate the utility of C$^3$s on a control dynamical system that describes an instance of Hopf bifurcation that under certain control inputs can either exhibit periodic orbits or die out over time \cite{herman2008second}. 
Wind-induced oscillations for an overhead line \cite{myerscough1973simple} are an example of such system where the states $x_1$ and $x_2$ represent the displacement of the suspended conductor from its equilibrium position.
The control input is used to regulate the value of these states so that they stay within a predetermined bound, which is denoted by the set of states that should be visited infinitely often. 
If the system were to start outside the bounds, a control input will be applied to push the states to be within bounds. 
The set of states outside the bounds (or its subset) is then designated as a set that should be visited only finitely often. 
The definition of of our system is given below.
Given a state $x = x(t)$, we use $x'$ to denote $x(t+1)$.
\begin{align}
    \label{eq:sys_eq1}
    x_1' = x_1 + T(ux_1 -x_2-x_1(x_1^2 + x_2^2)),\\
    \label{eq:sys_eq2}
    x_2' = x_2 + T(x_1 -ux_2-x_2(x_1^2 + x_2^2)),
\end{align}
where $T = 0.1s$ is the sampling time, $x^T = [x_1,x_2]$ is the state of the system, and $u$ is the control input.
We first start with a linear function in two variables and in a single variable as our parametric template for $\Tt(x,y)$ and $\Vv(x)$ (or $\Zz(x)$), respectively. 
We increase the degree of both template functions until we find the polynomials satisfying the C$^3$ conditions for finite visit only, infinite visit only and both finite and infinite visits.
We continue to increment the degree of the polynomial template up to $degree_{max} = 4$, above which the SOS fails to compute due to device memory constraints.
To solve these SOS constraints, we make use of JuMP \cite{Lubin2023} and TSSOS \cite{wang2021tssos} in Julia. Our code is available online.
\footnote{
[Online]. Available: \url{https://github.com/maoumer/CCC}
}

\subsection{Without DPA}
For the system given above, the state set, the set of initial states, finite visit states, infinite visit states, and the input set are given by $\Xx = [-0.75,1]\times[-0.75,0.75],\Xx_0 = [0.8,1]\times[-0.2,0.2]$, $\Xx_{VF} = [0.8,1]\times[0,0.75]$, $\Xx_{INF} = [-0.75,0.75]^2$, and $U = [-3,0.5]$, respectively. 
Using constants $\mu_i = 0.5$, $\tau_i = 1$ and $\xi_i = 0.1$, we were able to obtain cubic polynomial C$^3$s. Figure \ref{fig:trajectories} displays sample trajectories generated for the system along with the relevant state sets under consideration.
The successful search results of our SOS program are included in Appendix \ref{ap:ss:no_dpa}. 

\subsection{With DPA}
\begin{figure}[t]
\centering
\begin{minipage}[b]{0.55\textwidth}
    \epsfig{file=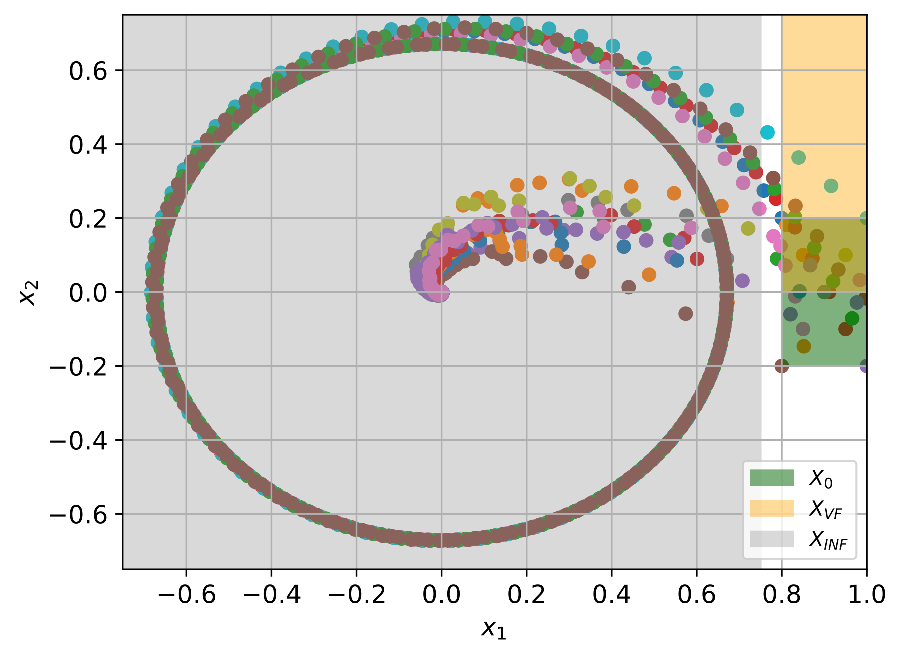, width=1.0\textwidth, keepaspectratio}
    \caption{Sample trajectories of the system along with relevant regions shaded.} 
    \label{fig:trajectories}
\end{minipage}
\hfill
\begin{minipage}[b]{0.4\textwidth}
    \begin{tikzpicture}[node distance =2cm]
    \node[initial, state, draw, initial text =,fill=green!10!white] (1) at (0,0) {$q_1$};
    \node[state, fill=blue!10!white,] (2) at (0,-3) {$q_2$};
    \node[state, fill=red!10!white,] (3) at (2.5,0) {$q_3$};
    \path[->]
    (1) edge[] node[left]{$b$} (2)
    (1) edge[] node[above]{$a$} (3)
    (2) edge[bend right] node[left]{$a$} (3)
    (2) edge[loop left] node[left]{$b$} (2)
    (3) edge[bend right] node[left]{$b$} (2)
    (3) edge[loop right] node[right]{$a$} (3);
    \end{tikzpicture}
    \caption{A DPA $\Aa$ representing the LTL formula $\mathsf{FG} a$.}
    \label{fig:aut_case_study_ltl}
\end{minipage}
\end{figure}

To demonstrate our approach for designing controllers for DPA objectives, we consider the DPA $\Aa = (\Sigma, Q ,Q_0, \delta, Acc)$ in Figure \ref{fig:aut_case_study_ltl}, where $\Sigma = \{ a,b \}$ $Q = {q_1,q_2,q_3}$, $Q_0 = {q_1}$, $Acc: Q \to \{1, \ldots, 4 \}$ such that $Acc(q_1) = 1$, $Acc(q_2) = 3$, and $Acc(q_3) = 4$.
The specification is to show that eventually, the system only witnesses states with label $a$.
This specification cannot be captured via a DBA \cite{vardi_2005_automata}.
To find a C$^3$, we consider the product of the system with the DPA. 
Here $q_1$ is the initial state, $q_2$ is the state that must be visited only finitely often, and $q_3$  only infinitely often. 
We consider a control system $\Sys = (\Xx,\Xx_0,U,f)$ where $\Xx = [-0.75,1]\times[0,0.75]$, $\Xx_0 = [0.8,1]\times[0,0.2]$, and $U = [-3,0.5]$.
We consider a labeling map $\Ll: \Xx \to \Sigma$, such that $\Ll(x) = a$ if $x \in \Xx_{INF} = [-0.75,0.75]\times[0,0.75]$, and $\Ll(x) = b$ if $x \in \Xx_{VF} = [0.75,1]\times[0,0.75]$. 
The dynamic $f$ is given in equations~\eqref{eq:sys_eq1}-\eqref{eq:sys_eq2}.
Using constants $\mu_i = 0.5$, $\tau_i = 1$ and $\xi_i = 0.01$, we were able to obtain cubic polynomial C$^3$s.
The successful search results of our SOS program are included in Appendix \ref{ap:ss:with_dpa}. 

\section{Conclusion}
\label{sec:conclusion}
We introduced a notion of control closure certificates to synthesize controllers against $\omega$-regular specifications.
We discussed different conditions to show a set is visited finitely and infinitely often and demonstrated how one may combine them for $\omega$-regular specifications and demonstrated their efficacy over relevant case studies.
As future work, we plan to investigate their use in verifying more general hyperproperties as well as possible modifications via property-directed  techniques such as IC3 \cite{bradley_2011_sat}.
 \bibliographystyle{splncs04}
\bibliography{bibliography.bib}

\appendix

\section{Proof of Lemma \ref{lem:cc_fin_vis}}
\label{ap:pr_lem_cc_fin}
We now describe the proof of Lemma \ref{lem:cc_fin_vis}.
\begin{proof}
    We prove Lemma \ref{lem:cc_fin_vis} via contradiction.
    To do so, assume that there exists an initial state $x_0 \in \Xx_0$, such that the state sequence $\bm{x} = \langle x_0, x_1, \ldots \rangle$  visits the set $\Xx_{VF}$ infinitely often.
    Following conditions \eqref{eq:tbar_cond_2_fin_vis_verif} and \eqref{eq:tbar_cond_1_fin_vis_verif} and via induction, we have $\Tt(x_0, x_i) \geq 0$ and $\Tt(x_i, x_{j}) \geq 0$ for all $i \in \N$, and all $j \geq (i+1)$.
Let $\langle y_0, y_1, \ldots \rangle$ be the subsequence that visits $\Xx_{VF}$ only finitely often.
That is the state sequence is of the form $\bm{x}_{\bm{u}} = \langle x_0, x_1, \ldots, y_0, \ldots, y_1, \ldots \rangle$.
Via Ramsey's theorem \cite{ramsey1987problem}, there exists a subsequence $\langle z_0, z_1, \ldots \rangle \in \Xx_{VF_i}$ that visits $\Xx_{VF_i}$ infinitely often for some $0 \leq i \leq p$.
From the previous results, we know that $\Tt(x_0, z_i) \geq 0$ and $\Tt(z_i, z_j) \geq 0$ for all $i \in \N$, and all $j \geq (i+1)$.
Let $ \Vv_i^* := \Vv_i(x_0,z_0) $ and as function $\Vv_i$ is bounded from below let the lower bound be $\Vv_i^{\dagger}$. 
Following condition \eqref{eq:tbar_cond_3_fin_vis_verif} and via induction, we have $\Vv_i(x_0,z_j) \leq  \Vv_i(x_0, y_0) - j\xi_i $ for all $j \in \N_{ \geq 1}$.
Thus there exists some $j \in \N$, such that $\Vv_i(z_j) \leq  \Vv_i^* - j\xi_i < \Vv_i^{\dagger} $ which is a contradiction.
\qed
\end{proof}

\section{Proof of Lemma \ref{lem:recur_not_complete}}
\label{ap:lem_nc}
We now describe the proof of Lemma \ref{lem:recur_not_complete}.
\begin{proof}
Consider a dynamical system that has a single initial state $\Xx_0 = \{ x_0 \}$ and a single state sequence $\bm{x} = \langle x_0, x_1, \ldots \rangle$ that is the infinite chain described by Figure \ref{fig:inf_vis_chain}, such that the system has only one input.
While we do not provide an explicit system in this Lemma, we describe a real-valued program that has the same trace can be found in Figure \ref{fig:prog_not_omega_reg}.
Consider a finite alphabet $\Sigma = \{a,b\}$, and a labeling map $\Ll: \Xx \to \Sigma$ that assigns a label of $a$ to every state in $\Xx \setminus \Xx_{INF}$, and $b$ otherwise.
Then the trace is of the form $ab aab aaab \ldots$.
While such a word is not $\omega$-regular, it does satisfy the desired property of ensuring that the set $\Xx_{INF}$ is visited infinitely often.
Unfortunately, one is unable to find a certificate as in conditions \eqref{eq:tbar_cond_1_inf_vis_recur}-\eqref{eq:tbar_cond_3_inf_vis_recur} to ensure this fact (even if one knows the exact reachable set).
Without loss of generality, let us assume that $\xi = 1$, and we have a function $\Tt: \Xx \times \Xx \to \R$ such that $\Tt(x,y) \geq 0$, if and only if, $y$ is reachable from $x$.
Let the maximum value attainable by $\Tt(x,y)$ be denoted as $ \Tt^*$ \textit{i.e.}, $ \Tt^* = \underset{x,y \in \Xx}{\max} \{ \Tt(x,y) \}$.
Now let us consider the the subsequence of the form $\langle x_{j}, x_{j+1}, \ldots, x_{(j+\Tt^{*})}, x_{(j+\Tt^{*}+1)} \rangle$, where $x_{(\Tt^{*}+1+1)} \in \Xx_{INF}$, and $x_i \notin \Xx_{INF}$ for any $i \in \{j, \ldots, (j+\Tt^{*}) \} $. 
Observe that such a sequence exists as the distance between subsequent visits to the set $\Xx_{INF}$ increases as the length of the sequence increases.
Following conditions \eqref{eq:tbar_cond_1_inf_vis_recur} and \eqref{eq:tbar_cond_2_inf_vis_recur} and via induction we have $\Tt(x_i, x_k) \geq 0$ for all $k \geq (i+1)$, and $i \geq j$.
Similarly, we must have $\Tt(x_0, x_i) \geq 0$ for all $i \geq j$.
Finally, following condition \eqref{eq:tbar_cond_3_inf_vis_recur}, we have that $\Tt(x_0, x_{i+1}) \leq \Tt(x_0, x_i) - 1  $ for all  $i \geq j$.
Via induction, we have that $\Tt(x_0, x_{k}) \leq \Tt(x_0, x_j) + (k-j)\epsilon  $, and thus $\Tt(x_0, x_{(j+\Tt^{*}+1)}) \leq \Tt(x_0, x_j) + (\Tt^{*}+1) $.
As function $\Tt$ is bounded, we have  $\Tt(x_0, x_{(j+\Tt^{*}+1)}) \leq 0 +  (\Tt^{*}+1) \geq \Tt^{*}$.
This is a contradiction. 
Observe that one can find a similar argument for any $\xi \in \R_{ > 0}$, as it needs to be fixed independent of the state $y \in \Xx$ in condition \eqref{eq:tbar_cond_2_inf_vis_recur}.
\qed
\end{proof}

\label{ap:system_inf}

\begin{figure}[H]
\centering
\begin{lstlisting}[language=Python, caption={A program that generates the string $abaabaaab\ldots$. Observe that this program infinitely often visits the line print(''b``).}, label={fig:prog_not_omega_reg}]
init_val = 1
while(True):
    count = init_val
    while(count >= 1):
      count = count-1
      print("a", end = ``'')
   print("b", end =``'') 
    init_val = init_val + 1
\end{lstlisting}
\end{figure}

\section{Modified counter construction of the infinite chain}
\label{ap:inf_vis_counter}
An example of our counter construction is described in Figure \ref{fig:inf_vis_chain_counter}.
\begin{figure}[h!]
    \centering
    \begin{tikzpicture}[node distance =1cm]
    \node[initial, state, draw, initial text =] (0) at (0,0) {$(x_0, 0)$};
    \node[,state, fill=green!10!white, accepting] (1) at (1.5,0) {$(x_1, 0)$};
    \node[,state] (2) at (3,-1) {$(x_2, 1)$};
    \node[state] (3) at (4.5,-1) {$(x_3, 1)$};
        \node[state, fill=green!10!white, accepting] (4) at (6,-1) {$(x_4,1)$};
            \node[state,] (8) at (7.5,-2) {$(x_5, 2 )$};
        \node[] (9) at (9,-2) {};
    \node (10) at (10.5,0) {$V_0$};
    \node (11) at (10.5,-1) {$V_1$};
    \node (12) at (10.5,-2) {$V_2$};
    \path[->]
    (0) edge node[above]{} (1)
    (1) edge[] node[above]{} (2)
    (2) edge node[]{} (3)
   (3) edge node[]{} (4) 
    (4) edge node[]{} (8)
    (8) edge node[]{} (9);
    \end{tikzpicture}
    \caption{Counter construction for the infinite chain in Figure \ref{fig:inf_vis_chain}.
    One can find a different ranking function $\Vv_i$ for every counter value $i \in \N$. 
    }
    \label{fig:inf_vis_chain_counter}
\end{figure}
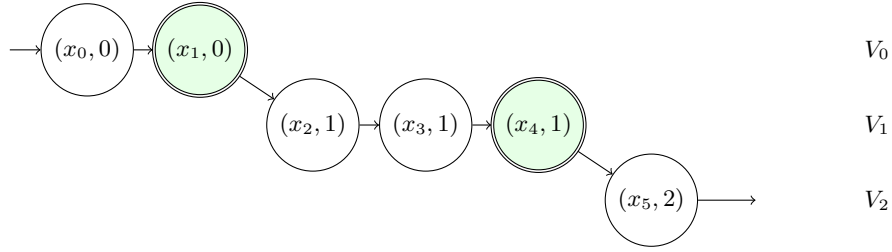

\section{Proof of Theorem \ref{thm:ccc_inf}}
\label{ap:Thm_cc_inf_proof}
\begin{proof}
    We prove Theorem \ref{thm:ccc_inf} via contradiction.
    To do so, assume that there exists an initial state $x_0 \in \Xx_0$, such that for all input sequences $\bm{u}  = \langle u_0, u_1, \ldots \rangle $, we have that the corresponding state sequence $\bm{x}_{\bm{u}} = \langle x_0, x_1, \ldots \rangle$ visits the set $\Xx_{INF}$ only finitely often.
    Consider the control input sequence selected such that $\Tt( (x_i,j), (f(x_i, u_i), k ) \geq 0$ for all $i \in \N$, where $k = j+1$ if $x_i \in \Xx_{INF}$ and $k = j$ otherwise.
    As condition \eqref{eq:tbar_cond_1_inf_vis_counter} holds, one can select a control input  in this manner for any $x_i$ in the state sequence.
    Following conditions \eqref{eq:tbar_cond_1_inf_vis_counter} and  \eqref{eq:tbar_cond_2_inf_vis_counter} and via induction, we have $
\Tt((x_0,0), (x_i, \ell_i)) \geq 0$ and $\Tt((x_i,\ell_i), (x_{j}, \ell_j) \geq 0$ for all $i \in \N$, and all $j \geq (i+1)$, and some $\ell_i \in \N$, and $\ell_j \geq \ell_i$.
Intuitively, $\ell_i$ denotes the number of times a state in $\Xx_{INF}$ is seen before state $x_i$, and similarly, $\ell_j$ denotes the number of times a state in $\Xx_{INF}$ is seen before $x_j$.
Let the state sequence stop visiting the set $\Xx_{INF}$ from index $k$, \textit{i.e.}, $\bm{x}_{\bm{u}}[k, \infty[ = \langle y_0, y_1, \ldots \rangle \in (\Xx \setminus \Xx_{INF})^{\omega}$. 
From the previous results, we know that $\Tt((x_0,0) , (y_i, k )) \geq 0$ and $\Tt((y_i,k), (y_j, k)) \geq 0$ for all $i \in \N$, and all $j \geq (i+1)$.
Let $ \Vv_k^* := \Vv(x_0, y_0,k) $ and as function $\Vv$ is bounded from below for $k$, let the lower bound be $\Vv_k^{\dagger}$ for function $\Vv_k$. 
Following condition \eqref{eq:tbar_cond_3_inf_vis_counter} and via induction, we have $\Vv(x_0, y_{\ell},k) \leq  \Vv(x_0,y_0,k) - \ell\xi_k $ for all $\ell \in \N_{ \geq 1}$.
Thus there exists some $\ell \in \N$, such that $\Vv_i(x_0, y_\ell, k) \leq  \Vv_k^* - \ell\xi_k < \Vv_k^{\dagger} $ which is a contradiction.
\qed
\end{proof}
\section{Disjunctive C$^3$s for Infinite Visits}
\label{ap:ccc_disjunct_inf}
\begin{definition}
    \label{def:ccc_inf_vis_counter_disj}
    Consider a control system $\Sys = (\Xx, \Xx_0, U, f)$ and a set of states $\Xx_{INF}$ that must be visited infinitely often.
    Let $\Xx \setminus \Xx_{INF} = \underset{1 \leq i \leq p}{ \bigcup} \Xx_i$.
    Then, function $\Tt: (\Xx \times \R) \times (\Xx \times \R) \to \R$, and functions $\Vv_i: \Xx \times \Xx \times \R \to \R_{\geq 0}$ defined for all $1 \leq i \leq p$, that are bounded from below in $\Xx \times \Xx$ for every $j \in \R$, are a control closure certificate, if for all $x \in \Xx$, and $j \in \R$, there exists $u \in U$ such that:
    \begin{align}
        &\big( \Tt( (x,j), (f(x,u), k) ) \geq 0 \big), \label{eq:tbar_cond_1_inf_vis_counter_djunct}
    \end{align}
    where $k = j+1$ if $x \in \Xx_{INF}$, and $k = j$ otherwise.
    And for all $x, y \in \Xx$, $j \in \R$, $\ell \geq (j+1)$, and  $u \in U$, we have:
    \begin{align}
        & \big(  \Tt( (x,j), (f(x,u) ,k) \big) \geq 0 \big) \implies \nonumber \\
        & \Big( \big( \Tt( (f(x,u), k), (y, \ell)) \geq 0 \big) \implies \big( \Tt((x,j), (y, \ell)) \geq 0 \big) \Big), \label{eq:tbar_cond_2_inf_vis_counter_disjnuct}
    \end{align}
    where $k = j+1$ if $x \in \Xx_{INF}$, and $k = j$ otherwise.
    And for all $1 \leq i \leq p$, and all $j \in \R$, there exists $\xi_{i,j} \in \R_{ > 0}$ such that for all  $x_0 \in \Xx_0$, for all $z,z' \in \Xx_i$, we have:
    \begin{align}
            &\big( \Tt((x_0,0),(z,j)) \geq 0 \big) \wedge \big( \Tt((z,j),(z',j)) \geq 0 \big) \implies  \nonumber \\
         &\big( \Vv_i(x_0,z', j) \leq \Vv_i(x_0,z, j) - \xi_{i,j} \big).   \label{eq:tbar_cond_3_inf_vis_counter_disjnuct} 
    \end{align}
\end{definition}

\section{Results of Case Studies}
\label{ap:results}
The results of our case studies are shown below.

\subsection{Without DPA}
\label{ap:ss:no_dpa}
The monomials and corresponding coefficients for $\Tt(x,y)$ and $\Zz(x)$ for ensuring only finite visit are given in Table \ref{tab:finite_visits_no_dpa}.
Similarly, Table \ref{tab:infinite_visits_no_dpa} shows the coefficients for $\Tt(x,y)$ and $\Vv(x)$ for ensuring only infinite visits, and Table \ref{tab:fin_inf_visits_no_dpa} shows the coefficients for $\Tt(x,y)$, $\Zz(x)$ and $\Vv(x)$ for ensuring both finite and infinite visits.
The monomials are the same as the ones given in Table \ref{tab:finite_visits_no_dpa}. Observe that we drop the subscript indexing in Section \ref{sec:comp} when searching for C$^3$s.

\subsection{With DPA}
\label{ap:ss:with_dpa}
The monomials for $\Tt(x,y)$, $\Zz(x)$ and $\Vv(x)$ are the same as what are shown in \ref{tab:finite_visits_no_dpa}. The corresponding coefficients of $\Tt(x,y)$ and $\Zz(x)$ for ensuring only finite visits are given in Table \ref{tab:finite_visits_dpa}.
Similarly, Table \ref{tab:infinite_visits_dpa} shows the coefficients for $\Tt(x,y)$ and $\Vv(x)$ for ensuring only infinite visits, and Table \ref{tab:fin_inf_visits_dpa} shows the coefficients for $\Tt(x,y)$, $\Zz(x)$ and $\Vv(x)$ for ensuring both finite and infinite visits. Observe that when searching for C$^3$s, we drop the subscript indexing in Section \ref{sec:comp} and instead introduce the superscript indexing $(p,q)$ for automaton state pairs and $(q)$ for an automaton state.

\begin{table}[!t]
\centering
\begin{tabular}{|p{4em}|p{0.8\textwidth}|} 
  \hline
  $\mathbf{b}(x,y)^T$ &
  [$1$, $x_{1}$, $x_{2}$, $y_{1}$, $y_{2}$, $x_{1}^2$, $x_{1}x_{2}$, $x_{1}y_{1}$, $x_{1}y_{2}$, $x_{2}^2$, $x_{2}y_{1}$, $x_{2}y_{2}$, $y_{1}^2$, $y_{1}y_{2}$, $y_{2}^2$, $x_{1}^3$, $x_{1}^2x_{2}$, $x_{1}^2y_{1}$, $x_{1}^2y_{2}$, $x_{1}x_{2}^2$, $x_{1}x_{2}y_{1}$, $x_{1}x_{2}y_{2}$, $x_{1}y_{1}^2$, $x_{1}y_{1}y_{2}$, $x_{1}y_{2}^2$, $x_{2}^3$, $x_{2}^2y_{1}$, $x_{2}^2y_{2}$, $x_{2}y_{1}^2$, $x_{2}y_{1}y_{2}$, $x_{2}y_{2}^2$, $y_{1}^3$, $y_{1}^2y_{2}$, $y_{1}y_{2}^2$, $y_{2}^3$]
  \\
  \hline
  $\mathbf{c}_{\Tt}^T$ & [0.0, 0.0, 0.0, 0.0, 0.0, 201.843, -14.203, 0.0, 0.0, 172.842, 0.0, 0.0, $-234.516$, 10.871, -200.712, -73.892, 4.237, -0.809, -0.579, -90.436, 0.308, -2.24, 0.0, 0.0, 0.0, 3.133, 3.114, -1.086, 0.0, 0.0, 0.0, 68.172, -5.736, 89.093, 2.538]\\ 
  \hline
  $\mathbf{b}(x)^T$ & [$1$, $x_{1}$, $x_{2}$, $x_{1}^2$, $x_{1}x_{2}$, $x_{2}^2$, $x_{1}^3$, $x_{1}^2x_{2}$, $x_{1}x_{2}^2$, $x_{2}^3$]\\ 
  \hline
  $\mathbf{c}_{\Zz}^T$ & [11.74, 6.28, -0.239, 17.96, -1.598, 16.974, -5.414, -1.786, -6.245, 1.603]\\ 
  \hline
\end{tabular}
\caption{C$^3$ function parameters for verifying finite visits.}
\label{tab:finite_visits_no_dpa}
\end{table}

\begin{table}[!t]
\centering
\begin{tabular}{|p{2.5em}|p{0.8\textwidth}|} 
  \hline
  $\mathbf{c}_{\Tt}^T$ & [0.0, 0.0, 0.0, 0.0, 0.0, 272.602, -35.161, 0.0, 0.0, 227.626, 0.0, 0.0, -315.997, 31.929, -262.58, -123.719, 23.206, -0.545, -1.015, -152.684, $-0.459$, -2.093, 0.0, 0.0, 0.0, 21.691, 3.475, -1.021, 0.0, 0.0, 0.0, 116.451, -23.176, 152.285, -14.105]\\
  \hline
  $\mathbf{c}_{\Vv}^T$ & [15.978, 11.556, -3.123, 23.794, -3.589, 21.99, -13.244, 1.551, -10.816, 1.528]\\ 
  \hline
\end{tabular}
\caption{C$^3$ function parameters for verifying infinite visits.}
\label{tab:infinite_visits_no_dpa}
\end{table}

\begin{table}[!t]
\centering
\begin{tabular}{|p{2.5em}|p{0.8\textwidth}|} 
  \hline
  $\mathbf{c}_{\Tt}^T$ & [0.0, 0.0, 0.0, 0.0, 0.0, 229.094, -30.185, 0.0, 0.0, 189.194, 0.0, 0.0, $-265.751$, 27.07, -218.113, -103.806, 20.4, -0.539, -0.59, -127.156, $-0.333$, -1.633, 0.0, 0.0, 0.0, 13.119, 2.926, -1.218, 0.0, 0.0, 0.0, 98.212, -20.828, 125.42, -5.824]\\ 
  \hline
  $\mathbf{c}_{\Zz}^T$ & [16.511, 7.797, -1.113, 25.421, -2.668, 24.302, -8.324, -0.826, -10.269, 2.726]\\
  \hline
  $\mathbf{c}_{\Vv}^T$ & [16.781, 9.683, -2.84, 26.512, -2.773, 24.487, -12.942, 0.742, -11.804, 1.906]\\ 
  \hline
\end{tabular}
\caption{C$^3$ function parameters for verifying both finite and infinite visits.}
\label{tab:fin_inf_visits_no_dpa}
\end{table}

\begin{table}[!t]
\centering
\begin{tabular}{|p{2.5em}|p{0.8\textwidth}|}
  \hline
  $\mathbf{c}_{\Tt}^{(1,1)}$ & [89.996, 156.033, 23.408, 6.196, 35.292, 206.866, -92.511, 0.729, 7.894, 157.214, 2.485, 16.414, 59.013, 2.318, 34.885, 59.308, 63.135, 5.527, 40.86, 35.955, -1.789, -7.367, 11.174, 0.193, 4.979, 47.742, 5.148, 32.046, 24.523, 0.675, 10.611, 8.007, 19.08, 0.79, 14.229]$^T$\\
  \hline
  $\mathbf{c}_{\Tt}^{(1,2)}$ & [1.552, 123.315, -22.109, -126.67, 10.713, 82.713, -67.157, -84.825, 16.458, 20.893, 23.701, -15.942, 1.366, 17.894, 0.529, -5.495, 9.894, -54.431, 3.403, -10.329, 29.394, -11.163, 24.669, -20.222, -0.478, 3.184, -10.131, 0.557, 0.923, 7.075, -1.198, 19.815, -13.503, 5.491, 0.929]$^T$\\
  \hline
  $\mathbf{c}_{\Tt}^{(1,3)}$ & [3.335, 188.158, -17.719, -190.056, -0.164, 86.791, -115.61, -90.332, $-0.055$, 51.233, 11.201, -63.485, -11.844, 103.922, 23.013, 7.817, 21.025, -23.855, 20.72, -5.67, -34.793, -21.21, 10.15, -27.501, -3.647, 26.886, 13.486, -34.463, 37.815, 12.094, -5.394, -16.582, -22.413, $-13.575$, 12.579]$^T$\\
  \hline
  $\mathbf{c}_{\Tt}^{(2,1)}$ & [-12.234, 115.841, -4.916, -3.31, -15.88, 183.51, -123.22, -2.56, -9.662, 135.262, -0.831, -1.306, -24.46, -1.06, -15.683, 59.276, 49.078, 6.794, 35.893, 40.93, -4.926, -23.836, -13.878, -1.07, -8.67, 46.625, 6.495, 33.278, -1.849, -0.454, -2.587, -3.465, -8.938, -0.311, -5.876]$^T$\\
  \hline
  $\mathbf{c}_{\Tt}^{(2,2)}$ & [-13.553, 119.863, -13.744, -110.025, -0.048, 99.828, -81.658, -10.868, -0.597, 26.451, -3.375, 4.385, -89.661, 74.328, -29.29, -18.658, 26.761, -11.351, 4.339, -18.902, -1.833, -8.605, 15.162, 5.041, -2.01, 6.905, $-2.401$, 1.495, -3.287, 2.327, 2.364, -5.928, -22.049, 10.926, -7.457]$^T$\\
  \hline
  $\mathbf{c}_{\Tt}^{(2,3)}$ & [2.685, 187.494, -17.375, -190.056, -0.164, 66.91, -111.814, -24.541, -3.81, 35.786, -1.58, -29.852, -57.977, 114.047, 2.253, 1.267, 24.962, 13.926, 15.695, -3.657, -15.363, 3.35, -11.233, 7.093, -20.132, 21.533, 3.556, -22.9, 10.241, 10.327, 5.375, -26.883, -53.757, -18.798, -6.246]$^T$\\
  \hline
  $\mathbf{c}_{\Tt}^{(3,1)}$ & [-77.762, 190.056, 0.0, -2.885, -19.412, 176.558, -159.133, 0.0, -0.003, 132.287, 0.0, 0.001, -34.552, -1.258, -19.201, 71.198, 78.876, 3.897, 27.108, 56.705, -4.725, -27.116, 0.001, 0.0, 0.005, 46.509, 5.625, 33.037, 0.0, 0.0, -0.001, -4.542, -10.141, -0.479, -8.353]$^T$\\
  \hline
  $\mathbf{c}_{\Tt}^{(3,2)}$ & [-73.825, 190.056, 0.0, -126.741, -8.666, 97.947, -113.184, 0.0, -0.004, 32.656, 0.0, 0.0, -86.022, 80.219, -34.365, 0.935, 42.628, -10.351, 20.895, -12.227, -9.594, -22.264, 0.001, 0.0, 0.005, 14.533, 1.961, 1.784, 0.0, 0.0, 0.001, 6.871, -21.899, 14.603, -7.76]$^T$\\
  \hline
  $\mathbf{c}_{\Tt}^{(3,3)}$ & [0.0, 190.072, -0.003, -190.055, -0.17, 65.777, -125.166, 0.002, -0.066, -1.308, 0.0, 0.013, -77.103, 123.454, -28.145, 22.563, 25.987, -1.013, 14.806, 24.699, -12.717, -15.687, -0.017, -0.001, 0.071, 33.256, 1.111, -8.041, 0.004, 0.0, -0.014, -31.083, -51.608, -29.545, -10.623]$^T$\\
  \hline
  $\mathbf{c}_{\Zz}^{(2)}$ & [20.573, 48.074, -10.091, 76.913, -56.651, 25.141, 6.555, 9.809, -9.399, 7.194]$^T$\\ 
  \hline
\end{tabular}
\caption{C$^3$ function parameters for verifying finite visits.}
\label{tab:finite_visits_dpa}
\end{table}

\begin{table}[!t]
\centering
\begin{tabular}{|p{2.5em}|p{0.8\textwidth}|} 
  \hline
  $\mathbf{c}_{\Tt}^{(1,1)}$ & [4.003, 31.181, 1.953, -33.239, 3.176, 11.586, -6.606, -16.35, 0.289, 0.938, -0.736, 0.673, -12.434, 6.777, -4.226, -0.009, 1.547, -6.076, 0.191, -0.891, 0.074, -0.111, 0.909, -1.944, 0.269, 0.69, -0.529, 0.114, 1.006, -0.454, 0.033, -0.813, -0.372, 0.637, -0.4]$^T$\\
  \hline
  $\mathbf{c}_{\Tt}^{(1,2)}$ & [4.87, 92.446, -19.838, -97.432, 12.167, 55.392, -36.646, -47.169, 7.913, 8.029, 5.066, -4.216, -12.963, 17.417, -2.826, -5.031, 8.305, -32.301, 3.425, -6.717, 15.38, -8.886, 15.215, -15.646, 1.496, 1.15, -4.178, -0.618, 5.316, 0.111, 1.636, 11.45, -13.789, 5.155, -0.591]$^T$\\
  \hline
  $\mathbf{c}_{\Tt}^{(1,3)}$ & [4.268, 178.988, -17.4, -180.881, -0.103, 79.831, -109.797, -95.962, $-1.065$, 47.778, 12.425, -68.489, -0.669, 101.504, 30.335, 8.058, 24.667, -25.828, 22.828, -5.898, -35.285, -21.42, 10.175, -31.182, -4.052, 28.325, 13.017, -34.207, 38.998, 12.706, -7.075, -12.666, -24.888, -12.405, 13.251]$^T$\\
  \hline
  $\mathbf{c}_{\Tt}^{(2,1)}$ & [-226.174, 77.159, -16.395, -45.823, -59.825, 146.398, -85.587, -3.475, -0.216, 68.154, -2.652, 1.32, -59.563, 5.105, -25.231, 28.074, 39.422, 7.354, 32.053, 0.599, 0.801, 0.203, -0.54, 2.259, 2.893, 25.289, 7.864, 18.592, 2.342, 2.212, 1.793, 4.557, -5.607, 0.966, -2.171]$^T$\\
  \hline
  $\mathbf{c}_{\Tt}^{(2,2)}$ & [-14.076, 90.62, -15.9, -77.448, 2.612, 90.992, -72.405, -4.837, -2.274, 25.094, -3.294, 6.984, -87.175, 70.307, -31.08, -20.947, 32.604, -16.639, 5.107, -14.445, -0.281, -10.925, 23.924, 3.66, -0.543, 5.54, -2.758, 1.027, -6.209, 3.281, 2.601, -5.466, -24.448, 9.102, -6.146]$^T$\\
  \hline
  $\mathbf{c}_{\Tt}^{(2,3)}$ & [3.083, 178.709, -16.841, -180.881, -0.103, 59.335, -108.077, -25.296, -4.21, 32.428, -0.177, -32.889, -49.277, 111.14, 8.24, -3.627, 30.968, 16.814, 16.715, -5.707, -14.987, 7.89, -12.879, 6.458, -22.292, 21.497, 1.605, -21.302, 9.347, 11.299, 5.481, -23.642, -57.892, -19.685, -7.73]$^T$\\
  \hline
  $\mathbf{c}_{\Tt}^{(3,1)}$ & [-313.703, 180.881, 0.0, -45.207, -58.824, 144.116, -140.958, 0.0, -0.002, 88.061, 0.0, 0.0, -65.505, 6.775, -26.894, 48.282, 83.79, 4.941, 31.618, 22.813, -4.481, -19.126, 0.0, 0.0, 0.003, 37.711, 8.079, 30.526, 0.0, 0.0, -0.001, 3.121, -4.329, 1.311, -5.467]$^T$\\
  \hline
  $\mathbf{c}_{\Tt}^{(3,2)}$ & [-90.687, 180.881, 0.0, -93.648, -5.091, 90.39, -109.45, 0.0, -0.002, 36.364, 0.0, 0.0, -76.967, 73.359, -35.753, -1.254, 48.246, -9.865, 21.34, -12.164, -7.997, -21.605, 0.0, 0.0, 0.003, 11.883, 0.787, 1.927, 0.0, 0.0, 0.0, 7.537, -24.606, 14.864, -6.643]$^T$\\
  \hline
  $\mathbf{c}_{\Tt}^{(3,3)}$ & [0.0, 180.892, -0.002, -180.881, -0.107, 58.22, -121.39, 0.001, -0.042, -5.422, 0.0, 0.008, -69.373, 120.092, -23.107, 20.208, 31.314, -1.223, 15.112, 23.795, -12.432, -15.034, -0.01, 0.0, 0.045, 34.461, 1.043, -7.787, 0.003, 0.0, -0.009, -27.97, -56.099, -29.354, -11.655]$^T$\\ 
  \hline
  $\mathbf{c}_{\Vv}^{(1)}$ & [32.292, 28.092, -1.205, 19.631, -7.218, 3.942, -0.256, 1.281, -1.2, 0.397]$^T$\\
  \hline
  $\mathbf{c}_{\Vv}^{(2)}$ & [9.47, 26.13, -10.349, 55.307, -55.75, 22.926, -2.291, 15.594, -5.384, 4.494]$^T$\\
  \hline
\end{tabular}
\caption{C$^3$ function parameters for verifying infinite visits.}
\label{tab:infinite_visits_dpa}
\end{table}

\begin{table}[!t]
\centering
\begin{tabular}{|p{2.5em}|p{0.8\textwidth}|} 
  \hline
  $\mathbf{c}_{\Tt}^{(1,1)}$ & [3.414, 27.102, 1.706, -28.315, 3.031, 10.422, -5.12, -14.852, 0.416, 0.474, -0.89, 0.624, -10.525, 4.421, -3.067, 0.198, 1.175, -5.922, 0.231, -0.505, -0.101, -0.09, 0.485, -1.711, 0.239, 0.562, -0.517, 0.097, 0.937, -0.403, 0.036, -1.359, 0.148, 0.149, -0.284]$^T$\\
  \hline
  $\mathbf{c}_{\Tt}^{(1,2)}$ & [6.754, 75.948, -18.053, -82.059, 12.725, 47.298, -27.913, -38.963, 8.326, 4.414, 4.108, -3.92, -14.572, 9.845, -1.157, -1.921, 4.464, -28.659, 2.892, -4.173, 14.906, -6.508, 9.748, -11.127, 0.762, 0.658, -4.183, -0.475, 3.591, -0.128, 0.817, 12.928, -14.175, 4.201, -0.373]$^T$\\
  \hline
  $\mathbf{c}_{\Tt}^{(1,3)}$ & [4.198, 167.703, -16.268, -169.617, -0.154, 74.502, -103.355, -91.793, -1.226, 44.864, 11.517, -65.096, 1.154, 96.546, 29.217, 7.925, 23.646, $-24.738$, 22.115, -4.563, -33.502, -20.628, 9.752, -30.328, -4.278, 26.981, 12.268, -32.403, 36.902, 12.507, -7.234, -11.613, -23.43, -12.155, 12.907]$^T$\\
  \hline
  $\mathbf{c}_{\Tt}^{(2,1)}$ & [-211.997, 64.011, -19.751, -40.908, -58.591, 139.575, -72.534, -3.336, 0.266, 62.776, -2.658, 1.017, -56.673, 3.205, -24.087, 27.756, 34.731, 7.24, 31.214, 3.414, 0.726, -0.508, -0.525, 2.269, 2.793, 23.931, 7.508, 18.484, 2.059, 2.019, 1.611, 4.031, -5.305, 0.628, -2.148]$^T$\\
  \hline
  $\mathbf{c}_{\Tt}^{(2,2)}$ & [-13.987, 79.445, -17.21, -64.732, 4.975, 79.547, -60.244, -4.977, -0.502, 20.772, -2.91, 5.161, -75.761, 58.163, -26.638, -17.342, 27.125, -11.839, 1.95, -9.854, 0.859, -5.546, 15.991, 1.976, 0.014, 3.739, -2.42, 0.526, -5.789, 0.614, 1.276, -5.268, -17.449, 4.851, -3.89]$^T$\\
  \hline
  $\mathbf{c}_{\Tt}^{(2,3)}$ & [3.109, 167.608, -15.759, -169.616, -0.153, 54.595, -101.958, -24.286, -3.92, 30.413, 0.007, -31.493, -44.981, 105.51, 8.276, -4.445, 29.55, 17.02, 16.034, -4.344, -14.065, 8.291, -12.725, 5.948, -22.041, 19.981, 1.01, -19.836, 8.36, 11.395, 4.725, -22.012, -54.73, -19.885, -7.105]$^T$\\
  \hline
  $\mathbf{c}_{\Tt}^{(3,1)}$ & [-300.171, 169.616, 0.0, -40.285, -57.129, 136.489, -134.002, 0.0, -0.003, 83.14, 0.0, 0.001, -62.107, 4.874, -25.537, 46.671, 79.914, 4.814, 30.84, 23.719, -4.324, -19.021, 0.001, 0.0, 0.005, 36.276, 7.489, 29.327, 0.0, 0.0, -0.001, 2.712, -4.029, 0.927, -5.193]$^T$\\
  \hline
  $\mathbf{c}_{\Tt}^{(3,2)}$ & [-92.006, 169.616, 0.0, -79.077, -0.8, 83.739, -102.139, 0.0, -0.004, 33.795, 0.0, 0.0, -70.109, 61.042, -31.658, -1.448, 46.769, -9.072, 19.288, -11.087, -8.012, -18.518, 0.001, 0.0, 0.006, 11.686, 1.225, -0.341, 0.0, 0.0, 0.0, 6.984, -18.715, 9.873, -5.154]$^T$\\
  \hline
  $\mathbf{c}_{\Tt}^{(3,3)}$ & [0.0, 169.632, -0.003, -169.616, -0.159, 53.832, -115.197, 0.002, -0.063, -5.106, 0.0, 0.012, -64.46, 114.089, -21.604, 18.86, 30.138, -1.229, 14.409, 23.723, -11.924, -14.65, -0.015, -0.001, 0.068, 32.751, 0.999, -7.782, 0.004, 0.0, -0.013, -26.136, -53.189, -28.572, -10.726]$^T$\\
  \hline
  $\mathbf{c}_{\Zz}^{(2)}$ & [15.068, 27.023, -9.854, 54.129, -46.728, 20.994, -0.432, 10.042, -3.859, 3.144]$^T$\\ 
  \hline
  $\mathbf{c}_{\Vv}^{(1)}$ & [30.778, 23.608, -1.032, 17.916, -5.561, 3.146, 0.054, 1.007, -0.887, 0.301]$^T$\\
  \hline
  $\mathbf{c}_{\Vv}^{(2)}$ & [8.259, 21.928, -9.105, 49.06, -49.084, 20.147, -0.347, 11.934, -2.856, 2.899]$^T$\\
  \hline
\end{tabular}
\caption{C$^3$ function parameters for verifying both finite and infinite visits.}
\label{tab:fin_inf_visits_dpa}
\end{table}

\end{document}